%
%
%
%
%
%
%
%
%
%
%

\documentclass[lettersize,journal]{IEEEtran}
\usepackage{amsmath,amsfonts,amsthm,amssymb,bbm}
\usepackage{algorithmicx}
\usepackage{algorithm}
\usepackage{array}
\usepackage[caption=false,font=footnotesize,labelfont=rm,textfont=rm]{subfig}
\usepackage{textcomp}
\usepackage{stfloats}
\usepackage{url}
\usepackage{verbatim}
\usepackage{graphicx}
\usepackage{cite}
\hyphenation{op-tical net-works semi-conduc-tor IEEE-Xplore}
\usepackage[colorlinks,linkcolor=blue]{hyperref}

\usepackage{color}
\newtheorem{lemma}{\textbf{\textit{Lemma}}}
\newtheorem{theorem}{\textbf{\textit{Theorem}}}
\newtheorem{corollary}{\textbf{\textit{Corollary}}}
\newtheorem{remark}{\textbf{\textit{Remark}}}

\begin{document}
%
\title{Interleaved Training for Massive MIMO Downlink via Exploring Spatial Correlation}
%
%
%

\author{ Cheng Zhang,~\IEEEmembership{Member,~IEEE,} Chang Liu,~\IEEEmembership{Student Member,~IEEE,}  Yindi Jing,~\IEEEmembership{Senior Member,~IEEE,}\\Minjie Ding,~\IEEEmembership{Student Member,~IEEE,} Yongming Huang,~\IEEEmembership{Senior Member,~IEEE}
 \thanks{This work was supported in part by the National Natural Science Foundation of China under Grant 62271140 and 62225107, the Fundamental Research Funds for the Central Universities 2242022k60002, the Natural Science Foundation on Frontier Leading Technology Basic Research Project of Jiangsu under Grant BK20222001, and the Major Key Project of PCL. (Corresponding authors: Y. Huang, C. Zhang)}
\thanks{C. Zhang, C. Liu, M. Ding and Y. Huang are with the National Mobile Communication Research Laboratory, Southeast University, Nanjing 210096, China, and also with the Purple Mountain Laboratories, Nanjing 211111, China (e-mail: zhangcheng\_seu, 220210844, 220200816, huangym@seu.edu.cn).}
\thanks{Y. Jing is with the Department of Electrical and Computer Engineering, University of Alberta, Edmonton T6G 1H9, Canada (e-mail: yindi@ualberta.ca).}}

\markboth{IEEE Transactions on WIRELESS COMMUNICATIONS}%
{Submitted paper}
\maketitle

\begin{abstract}

Interleaved training has been studied for single-user and multi-user massive MIMO downlink with either fully-digital or hybrid beamforming. However, the impact of channel correlation on its average training overhead is rarely addressed. In this paper, we explore the channel correlation to improve the interleaved training for single-user massive MIMO downlink. For the beam-domain interleaved training, we propose a modified scheme by optimizing the beam training codebook. The basic antenna-domain interleaved training is also improved by dynamically adjusting the training order of the base station (BS) antennas during the training process based on the values of the already trained channels. Exact and simplified approximate expressions of the average training length are derived in closed-form for the basic and modified beam-domain schemes and the basic antenna-domain scheme in correlated channels. For the modified antenna-domain scheme, a deep neural network (DNN)-based approximation is provided for fast performance evaluation. Analytical results and simulations verify the accuracy of our derived training length expressions and explicitly reveal the impact of system parameters on the average training length. In addition, the modified beam/antenna-domain schemes are shown to have a shorter average training length compared to the basic schemes.

\end{abstract}

\begin{IEEEkeywords}

Massive MIMO, interleaved training, spatial correlation, conditional distribution, training overhead.

\end{IEEEkeywords}

%
\IEEEpeerreviewmaketitle

\section{Introduction}

Via exploiting the large number of spatial degrees-of-freedom provided by large-scale antenna arrays, massive MIMO systems can achieve significant performance improvement  compared to conventional MIMO systems \cite{marzetta2010noncooperative, ngo2013energy}. 
One crucial practical issue for massive MIMO downlink is the acquisition of channel state information (CSI) at the base station (BS), especially for frequency-division-duplexing (FDD) systems with no uplink-downlink channel reciprocity \cite{zhang2017sum}. 
Traditional downlink training and channel estimation schemes cause prohibitive training overhead due to the massive number of channel coefficients to be estimated \cite{biguesh2006training}.

Existing studies on the downlink CSI acquisition of massive MIMO can be divided into the following categories. In \cite{adhikary2013joint, choi2014downlink, shen2015joint, zhang2022sum}, the channel statistics, e.g., spatial and/or temporal correlation, are utilized to conduct beamformed channel estimation. 
In \cite{rao2014distributed, gao2015spatially, shen2015joint_tvt, venugopal2017channel, alevizos2018limited, ke2020compressive}, compressive sensing algorithms are designed to exploit the channel sparsity in the angular domain and/or the common sparsity among users and/or subcarriers. 
In \cite{han2019efficient, peng2019downlink, kim2020downlink, yang2019deep_1}, either the partial reciprocity between the uplink and downlink channels, e.g., with similar angle and delay of propagation paths, or their implicit relationships, e.g., both being the functions of user location, are used for channel training designs.

These aforementioned schemes aim to obtain the complete antenna-domain CSI with the smallest possible pilot overhead before the data transmissions, thus the training design and data transmission design are decoupled, which imposes limitations on the tradeoff between training overhead and performance. Further, only the throughput or diversity gain has been considered in these existing works. The quality-of-service (QoS) provided by the obtained CSI is not taken into consideration during the training process. Therefore, the training length or pilot overhead is fixed and does not adjust according to specific channel realizations. For massive MIMO systems, it is possible to use partial CSI to design the beamforming scheme for the data transmission period, especially with the outage probability performance measure. A new training method, namely the interleaved training, is proposed to dynamically adjust the training overhead according to the required QoS and specific channel realizations.

The idea of interleaved training was proposed in \cite{koyuncu2015interleaving,koyuncu2018interleaving} for the downlink of single-user full-digital massive antenna systems with independent and identically distributed (i.i.d.) channels, where the channels of different antennas are trained sequentially and the estimated CSI or indicator is fed back at the end of each training step.
With each feedback, the BS decides to conduct the training of another antenna's channel or to terminate the training process based on whether an outage occurs. Compared to traditional schemes, the interleaved training can achieve a significant reduction in training overhead with no degradation of outage performance.

The work in \cite{zhang2018it} applied the idea of interleaved training to beam-domain transmission, where joint beam-based interleaved training and data transmission schemes are proposed for massive MIMO systems with single and multiple users. In \cite{he2018interleaved}, an improved codebook is further designed for interleaved training in millimeter-wave hybrid massive MIMO downlink. 
In \cite{zhang2020interleaved}, a joint interleaved training and transmission design is proposed for large-intelligent surface (LIS) assisted systems under i.i.d. Rayleigh fading channels. 
Recently, an interleaved training design is proposed for multi-user massive MIMO downlink in \cite{jing2022sinr,jing2023interleaved}, in which analytical results on the training length and the transmission success rate are provided for the maximum-ratio transmission (MRT) precoding. 
Different from the single-user scheme, the multi-user scheme needs to judge whether the signal-to-interference-plus-noise (SINR) requirements of all users can be satisfied with partial CSI during the interleaved training procedure.
The advantages of interleaved training are clearly demonstrated in the above studies.

Different from previous papers on interleaved training \cite{koyuncu2015interleaving,koyuncu2018interleaving,zhang2018it,he2018interleaved,zhang2020interleaved, jing2022sinr,jing2023interleaved}, in this paper, we focus on exploiting channel statistics to further improve the performance of interleaved training for single-user massive MIMO downlink systems. Both beam-domain and antenna-domain modified interleaved training designs are proposed, and we further provide analytical results on the average training length of the proposed schemes and their comparison with those of the basic schemes. Detailed contributions are summarized as follows.

\begin{itemize}
\item In the modified beam-domain scheme, both the beam direction and the beam training order are optimized based on the channel correlation information. In the modified antenna-domain scheme, the BS antenna training order is dynamically adjusted during the training process according to the channel values of the already trained antennas. To this end, we derive the conditional distribution of the untrained BS channels given the values of the trained channels under general correlated channels.  And for exponentially correlated channels, we demonstrate that the conditional distribution of any untrained antenna's channel is only dependent on the channels of its nearest antennas on both sides in the already-trained antenna set, which significantly simplifies the complexity of the modified antenna-domain scheme.
\item  Closed-form expressions of the average training length are derived for the basic and modified beam-domain interleaved training schemes with general correlated channels. For exponentially correlated channels, we further provide a simplified approximation of the average training length for the modified scheme when the number of BS antennas is large. 
\item A closed-form average-training-length expression is also derived for the basic antenna-domain interleaved training with general correlated channels, and its simple approximation is given for exponentially correlated channels. For the average training length of the modified antenna-domain interleaved training, we propose a deep neural network (DNN)-based approximation to achieve fast performance evaluation.
\item Simulation results verify our derived analytical expressions and theoretical analysis on the impact of systems parameters, e.g., channel correlation, antenna number, and the requirement of the signal-noise-ratio (SNR), on the average training length of the basic and modified antenna/beam-domain interleaved training schemes under two typical correlated channels. In addition, simulations also demonstrate that our proposed modified antenna and beam-domain interleaved training schemes both outperform the basic interleaved training schemes.
\end{itemize}

The remainder of this paper is organized as follows.
In Section \ref{System Model}, we introduce the single-user massive MIMO downlink system with two typical channel correlation models and the basic antenna/beam-domain interleaved training scheme.
In Section \ref{Modified Beam-domain Interleaved Training}, the modified beam-domain interleaved training is proposed along with related analytical results.
In Section \ref{Modified Antenna-domain Interleaved Training}, we propose the modified antenna-domain interleaved training and conduct theoretical analysis.
Simulations are provided in Section \ref{Simulation}.
Section \ref{Conclusion} summarizes this work. Some proofs are included in the appendix.

\emph{Notation}: Bold upper and bold lower case letters denote matrices and vectors. 
$\mathbb{{C}}^{m \times n}$ denotes the $m$ by $n$ dimensional complex space.
$\mathbf{I}_{n}$ denote the $n$-dimensional identity matrix. 
The conjugate transpose, transpose, determinant, adjugate matrix, rank and inverse of $\mathbf{A}$ are denoted by $\mathbf{A}^{\rm{H}}$, $\mathbf{A}^{{\rm T}}$, ${\rm det}\left( \mathbf{A} \right)$, ${\rm adj}\left( \mathbf{A} \right)$, ${\rm rank}\left( \mathbf{A} \right)$ and $\mathbf{A}^{\rm -1}$. 
The vector $\mathbf{a}_i$ denotes the $i$-th column of the matrix $\mathbf{A}$.
For a vector ${\mathbf a}$, ${a}_n$ is the $n$-th element of $\mathbf{a}$ and $\mathbf{a}_{\mathbb{S}}$ is the sub-vector composed of the $s$-th element of $\mathbf{a}$ for $s\in\mathbb{S}$ when the subscript is an index set $\mathbb{S}$. Similarly, $[\mathbf{A}]_{m,n}$ is the $(m,n)$-th element of $\mathbf{A}$ and $[\mathbf{A}]_{\mathbb{S},\mathbb{T}}$ is the sub-matrix composed of the $(s,t)$-th element of $\mathbf{A}$ for $s\in\mathbb{ S} $ and $t\in\mathbb{ T}$. Define $[m:n]$ as the set $\{m,m+1,\dots,n\}$.
$\Pr \left(\mathcal{A}\right)$ represents the probability of event $\mathcal{A}$.
$\left\lfloor  \right\rfloor $ represents the floor function.
$\left\Vert \mathbf{a}\right\Vert $ denotes the Euclidean norm of $\mathbf{a}$.
${\rm diag}(\mathbf{a})$ is the diagonal matrix whose diagonal entries are elements of vector $\mathbf{a}$. 
$f_{X}\left(\cdot\right)$ denotes the probability density function (PDF) of a random variable (RV) $X$.
$\mathcal{CN}\left({\boldsymbol{\mu}},\boldsymbol{\Sigma}\right)$ denotes the circularly symmetric complex Gaussian distribution with mean vector $\boldsymbol{\mu}$ and covariance matrix $\boldsymbol{\Sigma}$.
${{\chi }^{2}}\left(k\right)$ denotes the chi-squared distribution with $k$ being the degrees of freedom. ${{\chi }^{2}}\left(k,\lambda\right)$ denotes the noncentral chi-squared distribution with $k$ and $\lambda$ being the degrees of freedom and the non-centrality parameter.
${{Q}_{1}}\left(a,b \right)$ is the first order Marcum Q-function.
$\cong$ denotes the equality in distribution.

\section{System Model}
\label{System Model}

We consider a massive MIMO downlink system with an $M$-antenna BS and a single-antenna user equipment (UE). The downlink BS-UE channel, denoted as $\mathbf{h}$, is modeled as a circular-symmetric complex Gaussian vector following the distribution $\mathcal{CN}\left(\mathbf{0},\mathbf{R}_{\mathbf{h}}\right)$, where $\mathbf{R}_{\mathbf{h}}$ is the channel covariance matrix.
One typical correlation model is the one-ring correlation model \cite{adhikary2013joint},\cite{you2015pilot}, which is expressed as
\setlength\abovedisplayskip{1pt}
\setlength\belowdisplayskip{1pt}
\begin{equation}\label{eq_onering}
\mathbf{R}_{\mathbf{h}}=\int_{\Theta_{\rm min}}^{\Theta_{\rm max}}g(\theta)\boldsymbol{\alpha}(\theta)\boldsymbol{\alpha}^{{\rm H}}(\theta)d\theta,
\end{equation}
where $\left[\Theta_{\rm min}, \Theta_{\rm max}\right]$ is the angle interval of the channel power seen at the BS, $g(\cdot)$ represents the power angle spectrum (PAS), satisfying $\int_{\Theta_{\rm min}}^{\Theta_{\rm max}}g(\theta)d\theta=1$, and $\boldsymbol{\alpha}(\theta)\in\mathbb{C}^{M\times1}$ is the BS array response vector. For the uniform linear array (ULA), $\boldsymbol{\alpha}(\theta)=\left[1,\dots,e^{-j2\pi D\sin\left(\theta\right)\left(M-1\right)}\right]^{{\rm T}}$ where $D$ is the antenna spacing ratio.
Another typical correlation model is the exponential one \cite{loyka2001channel}, i.e., 
\setlength\abovedisplayskip{1pt}
\setlength\belowdisplayskip{1pt}
\begin{equation}\label{eq:}
	[\mathbf{R}_{\mathbf{h}}]_{m,n}=\rho^{m-n},\forall m \ge n, m,n=1,...,M,
\end{equation}
where $\rho$, satisfying $r = \left| \rho  \right|<1$, is the channel correlation between adjacent antennas. 
This is a simple single-parameter model commonly used for many communication problems, which is also physically reasonable in the sense that the correlation decreases with increasing distance between antennas, e.g., in the ULA.

The downlink transmission can be represented as
\setlength\abovedisplayskip{1pt}
\setlength\belowdisplayskip{1pt}
\begin{equation}\label{eq3}
y=\sqrt{P}\mathbf{h}^{\rm H}\mathbf{w}s+n,
\end{equation}
where $y$ is the received signal at the user, $\mathbf{w} \in \mathbb{C}^{M}$ is the antenna-domain beamformer at the BS with the unit norm, i.e., $\left\Vert \mathbf{w}\right\Vert =1$,
$s$ is the transmitted symbol with unit average power,
$P$ is the transmit power and $n$ is the normalized receive noise at the UE which follows $\mathcal{CN}\left(0,1\right).$ 
The received SNR can be written as
\setlength\abovedisplayskip{1pt}
\setlength\belowdisplayskip{1pt}
\begin{equation}\label{SNR_eq4}
{\rm SNR}=P\left|\mathbf{h}^{\rm H}\mathbf{w}\right|^{2}.
\end{equation}
If the beam-domain transmission is conducted, 
$\mathbf{w}$ can be decomposed into two parts: $\mathbf{w} = \mathbf{W}_{\rm O}\mathbf{w}_{\rm I}$, where $\mathbf{W}_{\rm O} \in \mathbb{C}^{M\times B}$ ($B\le M$) is the external beamforming matrix, $\mathbf{w}_{\rm I} \in \mathbb{C}^{B}$ is the beam-domain beamformer, and $B$ is the number of beams.
One typical $\mathbf{W}_{\rm O}$ is the normalized discrete Fourier transformation (DFT) matrix $\mathbf{D}\in \mathbb{C}^{M\times M}$ with $[\mathbf{D}]_{m,n} = e^{j2\pi\frac{(m-1)(n-1)}{M}}/\sqrt{M}$. 
Define the $B$-dimensional beam-domain channel as $\bar{\mathbf{h}}= \mathbf{W}_{\rm O}^{\rm H}\mathbf{h}$. We have $\bar{\mathbf{h}}\sim \mathcal{CN}(\mathbf{0}, \mathbf{R}_{\bar{\mathbf{h}}})$ with $\mathbf{R}_{\bar{\mathbf{h}}} = \mathbf{W}_{\rm O}^{\rm H} \mathbf{R}_{\mathbf{h}}\mathbf{W}_{\rm O}$. Eq. \eqref{SNR_eq4} can be converted to
\setlength\abovedisplayskip{1pt}
\setlength\belowdisplayskip{1pt}
\begin{equation}\label{SNR_eq5}
{\rm SNR}=P\left|\bar{\mathbf{h}}^{\rm H}\mathbf{w}_{\rm I}\right|^{2}.
\end{equation} 
For a given target data transmission rate $R_{\rm th}$, an outage event occurs if ${\rm log_{2}\left(1+SNR\right)}< R_{\rm th}$, or equivalently if ${\rm SNR}<P\alpha_{\rm th}$, where $\alpha_{\rm th}=(2^{R_{\rm th}}-1)/P$ is the normalized receive SNR threshold.

\subsection{General Framework of Interleaved Training and the Basic Training Scheme}

In this subsection, we introduce the general antenna-domain and beam-domain interleaved training and give the basic interleaved training algorithms proposed in \cite{koyuncu2018interleaving, zhang2018it} as the baseline of our study. For a uniform representation, we define 
\setlength\abovedisplayskip{1pt}
\setlength\belowdisplayskip{1pt}
\begin{equation}
	\left(\tilde{\mathbf{h}},\tilde{\mathbf{w}}, L \right)=\begin{cases} \left(\mathbf{h}, \mathbf{w}, M\right), \text{ for antenna-domain training}
\\  
\left(\bar{\mathbf{h}},\mathbf{w}_{\rm I}, B \right), \text{ for beam-domain training}\\
\end{cases}.
\end{equation}
In the general antenna/beam-domain interleaved training scheme, the BS trains the channel of one antenna/beam for each step, and the order of the antennas/beams during the training is determined according to a predefined criterion.
After the $l$-th training step, the UE knows $\tilde{\mathbf{h}}_{\mathbb{A}_{l}}$ with $\mathbb{A}_{l}$ denoting the set of indices of the already trained BS antennas/beams \footnote{{The main purpose of interleaved training is to reduce the training overhead. In order to focus on the theoretical analysis and give more insights, we do not consider the error resulting from channel estimation and feedback quantization in our study.}}. To maximize the receive SNR based on this currently acquired CSI, the BS can conduct the following downlink beamforming  
\setlength\abovedisplayskip{1pt}
\setlength\belowdisplayskip{1pt}
\begin{equation}\label{eq7}
	\tilde{w}_n=\begin{cases}\frac{\tilde{h}_n}{\left\Vert\tilde{\mathbf{h}}_{\mathbb{A}_l}\right\Vert}, &\text{if    } n\in{\mathbb{A}_l}
\\  
0, &\text{if   } n\notin{\mathbb{A}_l}\\
\end{cases}.
\end{equation}
The receive SNR of this beamformer is thus ${\rm SNR}=P\left\Vert \tilde{\mathbf{h}}_{\mathbb{A}_l}\right\Vert ^{2}$. Based on whether an outage occurs, i.e., $\left\Vert \tilde{\mathbf{h}}_{\mathbb{A}_l}\right\Vert ^{2}<\alpha_{\rm th}$, the UE decides to notify the BS to continue training by one bit $0$, or feed back one bit $1$ and channel values of the already trained antennas/beams to the BS for transmission beamforming.

In the basic antenna/beam-domain interleaved training scheme as shown in Algorithm \ref{algorithm_1}, the channel of one BS antenna or one DFT beam is trained for each step, and the antennas/beams are trained sequentially following their original indices, i.e., after the $l$-th training step, the index set of the already trained antennas/beams is $[1:l]$.

\begin{algorithm}[htp]
\linespread{1} \selectfont
  \caption{Basic antenna/beam-domain interleaved training scheme\cite{koyuncu2018interleaving, zhang2018it}} 
    \begin{algorithmic}[1]
\item \textbf{Initialization:} $\mathbb{A}_{1}=\left\{ 1\right\} $; $l=1$; The BS sends a pilot for the UE to acquire $\tilde{h}_1$;
\item \textbf{While} $\left\Vert \tilde{\mathbf{h}}_{\mathbb{A}_l}\right\Vert ^{2}<\alpha_{\rm th}$ \&
$l<L$ \textbf{do}
\item \hspace{0.5cm}The UE sends one bit $0$ to the BS;
\item \hspace{0.5cm}The BS sends a pilot for the UE to acquire $\tilde{h}_{l+1}$;
\item \hspace{0.5cm}$l=l+1$; $\mathbb{A}_l=\left\{\mathbb{A}_{l-1},l\right\} $; 
\item \textbf{end}
\item \textbf{if} $\left\Vert \tilde{\mathbf{h}}_{\mathbb{A}_l}\right\Vert ^{2}\geq\alpha_{\rm th}$
\item \hspace{0.5cm}The UE feeds back one bit $1$ and $\tilde{\mathbf{h}}_{\mathbb{A}_l}$ to the BS;
\item \hspace{0.5cm}The BS conducts downlink beamforming according to Eq. \eqref{eq7};
\item \textbf{else}
\item \hspace{0.5cm}The UE feeds back one bit $0$ to the BS;
\item \textbf{end}
    \end{algorithmic} \label{algorithm_1}  
\end{algorithm}

\section{Modified Beam-Domain Interleaved Training and Performance Analysis}\label{Modified Beam-domain Interleaved Training}

In the basic beam-domain interleaved training scheme \cite{zhang2018it}, the adopted DFT beams give no guarantee to align the effective propagation paths, and nor does it consider setting a higher training priority for beams with stronger average power. In the following, we explore the channel covariance matrix to improve the beam-domain interleaved training via addressing the above issues. In addition, we perform analysis on the average training length of the modified beam-domain interleaved training and compare it with that of the basic beam-domain interleaved training to reveal the advantages of the modified design. The methods of acquiring channel covariance matrix at the BS in massive MIMO systems can be referred to \cite{decurninge2015channel, neumann2018covariance, li2022downlink}.

\subsection{Modified Beam-Domain Training Design}
Recall that the channel covariance matrix $\mathbf{R}_{\mathbf{h}}$ is positive semi-definite and we denote its rank as $r_M$. 
We consider the compact eigenvalue decomposition of $\mathbf{R}_{\mathbf{h}}$: $\mathbf{R}_{\mathbf{h}} = \mathbf{U}\boldsymbol{\Sigma}\mathbf{U}^{\rm H}$, where $\mathbf{U}$ is an $M\times r_M$ semi-unitary matrix and $\boldsymbol{\Sigma}={\rm diag}\{\delta_1,...,\delta_{r_M}\}$ with ${\delta }_{1}\ge{\delta }_{2}\cdots \ge {\delta }_{r_M}>0$. 
With the knowledge of $\mathbf{U}$ and $\boldsymbol{\Sigma}$, the BS can set $\mathbf{W}_{\rm O} = \mathbf{U}$, implying that $B = r_M$, and  therefore $\bar{\mathbf{h}}=\mathbf{U}^{\rm H}\mathbf{h}$ is the $B$-dimentional vector of the beam-domain channel coefficients. In the modified scheme, the BS trains the $B$ effective beams $\mathbf{u}_1,\mathbf{u}_2,\cdots,\mathbf{u}_{B}$ in turn, such that the beams are trained with decreasing average power. 
After $b$ steps of beam training, the BS obtains the beam-domain channels $\bar{\mathbf{h}}_{\mathbb{A}_{b}}$ where $\mathbb{A}_{b}=\{1,...,b\}$. The BS conducts the beam-domain precoding $\mathbf{w}_{\rm I}\in\mathbb{C}^B$ according to Eq. \eqref{eq7}.
And an outage occurs if $ 
	\left\Vert \bar{\mathbf{h}}_{\mathbb{A}_{b}} \right\Vert ^2 < {{\alpha }_{{\rm th}}}$. 
With this beam ordering, the specific process of the modified beam-domain training scheme can be referred to as Algorithm \ref{algorithm_1}.
The modified scheme has both beam alignment and ordering through the eigenmatrix $\mathbf{U}$ of the channel covariance matrix. This is the major difference from the basic one. 

From the Toeplitz eigen-subspace approximation result in \cite{grenander1958toeplitz},  the eigenvectors of the one-ring covariance matrix in Eq. \eqref{eq_onering} and those of the exponential covariance matrix in Eq. \eqref{eq:} can both be well approximated by the columns of a DFT matrix for $M\gg 1$. As $M$ increases asymptotically to infinity, both the modified scheme and the basic scheme use the DFT codebook for beam training and their difference then only lies in the order of the beams during training. In this case, since the modified scheme trains the beams with decreasing average power sequentially, it has a shorter average training length.

\subsection{Average Training Length Analysis}\label{Average Training Length Analysis}

Considering that the difference between the basic beam-domain interleaved training and the modified one lies only in the use of the training beam codebook, we first give the analysis of the average training length for the beam-domain interleaved training scheme with any given beam codebook, based on which the average training length of the modified scheme and its comparison with that of the basic scheme are subsequently given. 

\subsubsection{Analysis for the General Beam-Domain Scheme}
Recall that the receiver SNR after the $b$-th training step is ${\rm SNR}=P\left\Vert \bar{\mathbf{h}}_{\mathbb{A}_{b}} \right\Vert ^{2}$. From Algorithm \ref{algorithm_1}, we can see that the training stops after the $b$-th training step with probability $\Pr \left( {{\left| {{\bar{\mathbf{h}}}_{\mathbb{A}_{1}}} \right|}^{2}}\geq\alpha_{\rm th}  \right)$ for $b=1$ and $\Pr \left( {{\left\| {\bar{\mathbf{h}}_{\mathbb{A}_{b}}} \right\|}^{2}}\geq\alpha_{\rm th} \right)-\Pr \left( {{\left\| {\bar{\mathbf{h}}_{\mathbb{A}_{b-1}}} \right\|}^{2}}\geq\alpha_{\rm th}  \right)$ for $b=2,...,B-1$. And the training stops after the $B$-th training with probability $1-\Pr \left( {{\left\| {\bar{\mathbf{h}}_{\mathbb{A}_{B-1}}} \right\|}^{2}}\geq\alpha_{\rm th}  \right)$.
 The average training length of the beam-domain interleaved training scheme can be expressed as 
\setlength\abovedisplayskip{1pt}
\setlength\belowdisplayskip{1pt}
\begin{equation}\label{eq11}
\begin{aligned}
\hspace{-0.2cm}L_t  &=\Pr \left( {{\left| {{\bar{\mathbf{h}}}_{\mathbb{A}_{1}}} \right|}^{2}}\geq\alpha_{\rm th}  \right)+\\ 
&\sum_{b=2}^{B-1}b\left[ \Pr \left( {{\left\| {\bar{\mathbf{h}}_{\mathbb{A}_{b}}} \right\|}^{2}}\geq\alpha_{\rm th} \right)-\Pr \left( {{\left\| {\bar{\mathbf{h}}_{\mathbb{A}_{b-1}}} \right\|}^{2}}\geq\alpha_{\rm th}  \right) \right]\\
&+B \left[ 1-\Pr \left( {{\left\| {\bar{\mathbf{h}}_{\mathbb{A}_{B-1}}} \right\|}^{2}}\geq\alpha_{\rm th}  \right) \right]\\
&	=1+\sum\limits_{b=1}^{B-1}{\Pr \left( {{\left\| {\bar{\mathbf{h}}_{\mathbb{A}_{b}}} \right\|}^{2}}<\alpha_{\rm th}  \right)}.
\end{aligned}
\end{equation}

\begin{theorem}\label{Theorem1}
For an arbitrary beam codebook $\mathbf{W}_{\rm O}$, recall that $\mathbf{R}_{\bar{\mathbf{h}}} = \mathbf{W}_{\rm O}^{\rm H} \mathbf{R}_{\mathbf{h}}\mathbf{W}_{\rm O}$ and  define $\tilde{\mathbf{R}}_b = \left[\mathbf{R}_{\bar{\mathbf{h}}}^{1/2}\right]_{\mathbb{A}_{b},[1:M]}$. Consider the compact eigenvalue decomposition $\tilde{\mathbf{R}}_b^{{\rm H}}\tilde{\mathbf{R}}_b={{\mathbf{U}}_{b}} {{\mathbf{\Sigma }}_{b}} \mathbf{U}_{b}^{{\rm H}}$ where $\mathbf{U}_b$ is an $M\times r_b$ semi-unitary matrix, ${{\mathbf{\Sigma }}_{b}} ={\rm  diag}\left\{ {{\delta }_{b,1}},\dots ,{{\delta }_{b,r_b}}\right\}$ and $r_b$ is the rank of $\tilde{\mathbf{R}}_b^{{\rm H}}\tilde{\mathbf{R}}_b$.
Suppose that there are $T_b$ different eigenvalues with value of $\bar{\delta}_{b,t}$ and repeated time of $r_{b,t}$ for $t=1,...,T_b$.
Define $\mathbf{r}_b=[{{r}_{b,1}},...,{{r}_{b,T_b}}]^{\rm T}$.
The average training length of the general beam-domain interleaved training scheme under correlated channels can be expressed as 
\setlength\abovedisplayskip{1pt}
\setlength\belowdisplayskip{1pt}
\begin{equation}\label{eqtl}
\begin{aligned}
\hspace{-0.1cm}L_t=1+&\sum\limits_{b=1}^{B-1}{\prod\limits_{t=1}^{T_b}\left(\frac{1}{{\bar{\delta }_{b,t}}}\right)^{{{r}_{b,t}}}}\sum\limits_{k=1}^{T_b}\sum\limits_{s=1}^{{{r}_{b,k}}}{{\left(-1 \right)}^{{{r}_{b,k}}-s}}{{\bar{\delta }_{b,k}}}^{{{r}_{b,k}}-s+1}\\
&\times{{\Psi }_{b,k,s,\mathbf{r}_b}}\left[ 1-{{e}^{-\frac{\alpha_{\rm th}}{{\bar{\delta }_{b,k}}}}}\sum\limits_{u=0}^{{{r}_{b,k}}-s}{\frac{{{\left(\frac{\alpha_{\rm th}}{{\bar{\delta }_{b,k}}}  \right)}^{u}}}{u!}} \right],
\end{aligned}
\end{equation}
where ${{\Psi }_{b,k,s,\mathbf{r}_b}}={{\left( -1 \right)}^{{{r}_{b,k}}-1}}\sum\limits_{\mathbf{i}\in {{\Omega }_{b,k,s}}}\prod\limits_{n\ne k}\left( \begin{matrix}
{{i}_{n}}+{{r}_{b,n}}-1 \\
{{i}_{n}} \\
\end{matrix} \right)$ $\times{{\left( \frac{1}{{\bar{\delta }_{b,n}}}-\frac{1}{{\bar{\delta }_{b,k}}} \right)}^{-\left( {{i}_{n}}+{{r}_{b,n}} \right)}}$, $\mathbf{i}={{\left[ {{i}_{1}},\ldots ,{{i}_{T_b}} \right]}^{\rm T}}$
 and ${{\Omega }_{b,k,s}}=\left\{ \left[ {{i}_{1}},\ldots ,{{i}_{T_b}} \right]\in {{\mathbb{Z}}^{T_b}};\sum\limits_{j=1}^{T_b}{{{i}_{j}}=s-1,{{i}_{k}}=0,{{i}_{j}}\ge 0} \text{ for all } j \right\}.$
\end{theorem}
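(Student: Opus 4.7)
The plan is to work from Eq.~\eqref{eq11}, which already telescopes $L_t$ into $1+\sum_{b=1}^{B-1}\Pr(\|\bar{\mathbf{h}}_{\mathbb{A}_b}\|^2<\alpha_{\rm th})$, so that everything reduces to obtaining a closed-form CDF of the quadratic form $\|\bar{\mathbf{h}}_{\mathbb{A}_b}\|^2$ under correlated beam-domain channels and substituting it termwise.

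To characterize the distribution, I would write $\bar{\mathbf{h}}=\mathbf{R}_{\bar{\mathbf{h}}}^{1/2}\mathbf{z}$ with $\mathbf{z}\sim\mathcal{CN}(\mathbf{0},\mathbf{I}_M)$, so that $\bar{\mathbf{h}}_{\mathbb{A}_b}=\tilde{\mathbf{R}}_b\mathbf{z}$ and $\|\bar{\mathbf{h}}_{\mathbb{A}_b}\|^2=\mathbf{z}^{\rm H}\tilde{\mathbf{R}}_b^{\rm H}\tilde{\mathbf{R}}_b\mathbf{z}$. Inserting the compact eigendecomposition $\tilde{\mathbf{R}}_b^{\rm H}\tilde{\mathbf{R}}_b=\mathbf{U}_b\boldsymbol{\Sigma}_b\mathbf{U}_b^{\rm H}$ and using the unitary invariance of the complex normal distribution (so $\mathbf{U}_b^{\rm H}\mathbf{z}\sim\mathcal{CN}(\mathbf{0},\mathbf{I}_{r_b})$ since $\mathbf{U}_b$ is semi-unitary) yields $\|\bar{\mathbf{h}}_{\mathbb{A}_b}\|^2\cong\sum_{i=1}^{r_b}\delta_{b,i}|\tilde{z}_i|^2$, a positive linear combination of i.i.d.\ unit-mean exponential RVs. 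Grouping the $r_{b,t}$ repeated eigenvalues of common value $\bar{\delta}_{b,t}$ recasts the sum as $T_b$ independent Gamma RVs with integer shapes $r_{b,t}$ and distinct scales $\bar{\delta}_{b,t}$, whose Laplace transform is $\phi(p)=\prod_{t=1}^{T_b}(1+p\bar{\delta}_{b,t})^{-r_{b,t}}$. A partial-fraction expansion $\phi(p)=\sum_{k=1}^{T_b}\sum_{j=1}^{r_{b,k}}A_{k,j}(1+p\bar{\delta}_{b,k})^{-j}$, followed by term-by-term Laplace inversion into Erlang densities and integration over $[0,\alpha_{\rm th}]$ via the identity $\gamma(j,y)/(j-1)!=1-e^{-y}\sum_{u=0}^{j-1}y^u/u!$, produces precisely the bracketed factor in Eq.~\eqref{eqtl} after relabeling $j=r_{b,k}-s+1$.

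The main technical obstacle is extracting the residue coefficient $A_{k,j}$ at the order-$r_{b,k}$ pole $p=-1/\bar{\delta}_{b,k}$, which requires $(s-1)$-fold differentiation of the residual product $\prod_{n\neq k}(1+p\bar{\delta}_{b,n})^{-r_{b,n}}$. The multinomial Leibniz rule distributes these $s-1$ derivatives over the $T_b-1$ factors according to all nonnegative multi-indices $\mathbf{i}$ with $i_k=0$ and $\sum_n i_n=s-1$, i.e.\ exactly $\mathbf{i}\in\Omega_{b,k,s}$; each factor then contributes a binomial coefficient $\binom{i_n+r_{b,n}-1}{i_n}$, a sign $(-1)^{i_n}$, and an evaluation at the pole giving $(1/\bar{\delta}_{b,n}-1/\bar{\delta}_{b,k})^{-(i_n+r_{b,n})}$. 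Collecting the overall sign $(-1)^{s-1}$ together with the scale prefactors $\prod_t\bar{\delta}_{b,t}^{-r_{b,t}}\cdot\bar{\delta}_{b,k}^{r_{b,k}-s+1}$, and absorbing the sign into the $(-1)^{r_{b,k}-1}$ built into $\Psi_{b,k,s,\mathbf{r}_b}$ and the explicit $(-1)^{r_{b,k}-s}$ in Eq.~\eqref{eqtl}, reproduces the stated closed form. Summing over $b=1,\ldots,B-1$ via Eq.~\eqref{eq11} concludes the derivation.
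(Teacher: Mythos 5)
Your proposal is correct and follows essentially the same route as the paper: telescoping $L_t$ via Eq.~\eqref{eq11}, whitening $\bar{\mathbf{h}}_{\mathbb{A}_b}=\tilde{\mathbf{R}}_b\mathbf{h}_{\rm iid}$, diagonalizing $\tilde{\mathbf{R}}_b^{\rm H}\tilde{\mathbf{R}}_b$ to reduce $\|\bar{\mathbf{h}}_{\mathbb{A}_b}\|^2$ to a weighted sum of independent chi-square (Gamma) variables grouped by repeated eigenvalues, and integrating the resulting density to $\alpha_{\rm th}$. The only difference is that you re-derive the PDF of that Gamma mixture from first principles via Laplace transform, partial fractions, and the Leibniz-rule residue computation, whereas the paper simply cites Bj\"ornson et al.\ for the same expression; your derivation of the coefficients $\Psi_{b,k,s,\mathbf{r}_b}$ is consistent with the cited result.
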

\begin{proof} 
Recall that $\bar{\mathbf{h}}= \mathbf{W}_{\rm O}^{\rm H}\mathbf{h}\sim \mathcal{CN}(\mathbf{0}, \mathbf{R}_{\bar{\mathbf{h}}})$.
We have $\bar{\mathbf{h}}_{\mathbb{A}_{b}} \cong \tilde{\mathbf{R}}_b\mathbf{h}_{\rm iid}$ with ${\mathbf{h}}_{\rm iid}\sim \mathcal{C}\mathcal{N}\left( \mathbf{0},{{\mathbf{I}}_{M}} \right)$ and $
{{\left\| \bar{\mathbf{h}}_{\mathbb{A}_{b}} \right\|}^{2}}=\mathbf{h}_{\rm iid}^{{\rm H}}{{\mathbf{U}}_{b}} {{\mathbf{\Sigma }}_{b}} \mathbf{U}_{b}^{{\rm H}}\mathbf{h}_{\rm iid}\cong{{ \tilde{\mathbf{{\rm h}}}_{\rm iid}^{b,{\rm H}} }}{{\mathbf{\Sigma }}_{b}}\tilde{\mathbf{h}}_{\rm iid}^{b}
$
with $\tilde{\mathbf{h}}_{\rm iid}^{b}\sim \mathcal{C}\mathcal{N}\left( \mathbf{0},{{\mathbf{I}}_{r_b}} \right)$.
Therefore,
\setlength\abovedisplayskip{1pt}
\setlength\belowdisplayskip{1pt}
\begin{equation}\label{eq15}
{{\left\| \bar{\mathbf{h}}_{\mathbb{A}_{b}} \right\|}^{2}}=\sum\limits_{j=1}^{r_b}{{{\delta }_{b,j}}{{\left| \tilde{h}_{{\rm iid},j}^{b} \right|}^{2}}} \cong \sum\limits_{t=1}^{T_b}{\frac{1}{2}\bar{\delta}_{b,t}} {{Q}_{b,t}},
\end{equation}
where 
${{Q}_{b,t}}\sim {{\chi }^{2}}\left( 2{{r}_{b,t}} \right)$.
By using results in \cite{bjornson2009exploiting} on the sum of independent chi-square random variables, the PDF of ${{\left\| \bar{\mathbf{h}}_{\mathbb{A}_{b}} \right\|}^{2}}$ is
\setlength\abovedisplayskip{1pt}
\setlength\belowdisplayskip{1pt}
\begin{equation}\label{eq16}
\begin{aligned}
f\left( {{{\left\| \bar{\mathbf{h}}_{\mathbb{A}_{b}} \right\|}^{2}} }=x ;\mathbf{r}_b,{\bar{\delta}_{b,1}},\ldots,{\bar{\delta}_{b,T_b}} \right)=\prod\limits_{t=1}^{T_b}{\left(\frac{1}{{\bar{\delta }_{b,t}}}\right)^{{{r}_{b,t}}}}\\
\times\sum\limits_{k=1}^{T_b}{\sum\limits_{s=1}^{{{r}_{b,k}}}{\frac{{{\Psi }_{b,k,s,\mathbf{r}_b}}}{\left( {{r}_{b,k}}-s \right)!}}}{{\left( -x \right)}^{{{r}_{b,k}}-s}}{{e}^{-\frac{x}{{\bar{\delta }_{b,k}}}}}.
\end{aligned}
\end{equation} 
Therefore, we have
\setlength\abovedisplayskip{1pt}
\setlength\belowdisplayskip{1pt}
\begin{equation}\label{eq17}
\begin{aligned}
&\hspace{-0.5cm}\Pr\left( {{{\left\| \bar{\mathbf{h}}_{\mathbb{A}_{b}} \right\|}^{2}}}<\alpha_{\rm th}\right)=\prod\limits_{t=1}^{T_b}{\left(\frac{1}{{\bar{\delta }_{b,t}}}\right)^{{{r}_{b,t}}}}\times\\
&\hspace{0.5cm} \sum\limits_{k=1}^{T_b}{\sum\limits_{s=1}^{{{r}_{b,k}}}{\frac{{{\Psi }_{b,k,s,\mathbf{r}_b}}}{\left( {{r}_{b,k}}-s \right)!}}}\int_{0}^{\alpha_{\rm th}}{{{\left( -x \right)}^{{{r}_{b,k}}-s}}{{e}^{-\frac{x}{{\bar{\delta }_{b,k}}}}}dx} \\ 
&\hspace{-0.5cm} =\prod\limits_{t=1}^{T_b}{\left(\frac{1}{{\bar{\delta }_{b,t}}}\right)^{{{r}_{b,t}}}}\sum\limits_{k=1}^{T_b}{\sum\limits_{s=1}^{{{r}_{b,k}}}{{{{\left( -1 \right)}^{{{r}_{b,k}}-s}}{\bar{\delta }_{b,k}}^{{{r}_{b,k}}-s+1}{{\Psi }_{b,k,s,\mathbf{r}_b}}}}}\\
& \hspace{7em}\times\left[ 1-{{e}^{-\frac{\alpha_{\rm th}}{{\bar{\delta }_{b,k}}}}}\sum\limits_{u=0}^{{{r}_{b,k}}-s}{\frac{{{\left(\frac{\alpha_{\rm th}}{{\bar{\delta }_{b,k}}}  \right)}^{u}}}{u!}} \right]. \\ 
\end{aligned}
\end{equation}
Via substituting Eq. \eqref{eq17} into Eq. \eqref{eq11}, Eq. \eqref{eqtl} can be obtained.
\end{proof}


\subsubsection{Analysis for the Modified Beam-Domain Scheme}

The derivation for the average training length of the modified beam-domain interleaved training scheme can be referred to Theorem \ref{Theorem1} by setting the beam codebook as $\mathbf{W}_{\rm O}=\mathbf{U}$. Therefore, $\mathbf{R}_{\bar{\mathbf{h}}} = \mathbf{U}^{\rm H} \mathbf{R}_{\mathbf{h}}\mathbf{U}=\boldsymbol{\Sigma}$ and $\tilde{\mathbf{R}}_b = \left[\boldsymbol{\Sigma}^{\frac{1}{2}}\right]_{[1:b],[1:M]}$, which leads to ${{\delta }_{b,j}}={{\delta }_{j}}$ for $b=1,\dots,B,j=1,...,b$. It is noteworthy that since the modified scheme uses the eigenvectors of the channel covariance matrix as the beam codebook, $\delta_{b,j}$ is independent of the training step index $b$.
Suppose that there are $\bar{T}_b$ different values in the first $b$ eigenvalues of $\mathbf{R}_{{\mathbf{h}}}$ ${{\delta }_{j}}, j=1,...,b$ with value of $\bar{\delta}_{b,t}$ and repeated times $\bar{r}_{b,t}$ for $t=1,...,\bar{T}_b$. Define $\bar{\mathbf{r}}_b=[{\bar{r}_{b,1}},...,{\bar{r}_{b,\bar{T}_b}}]^{\rm T}$.
\begin{corollary}\label{Ltmodifed_beamdomain}
	 The average training length of the modified beam-domain interleaved training scheme can be expressed as:
\setlength\abovedisplayskip{1pt}
\setlength\belowdisplayskip{1pt}
\begin{equation}\label{eqbeam}
\begin{aligned}
\hspace{-0.3cm}	L_t= 1+&\sum\limits_{b=1}^{B-1}\prod\limits_{t=1}^{\bar{T}_b}{\left(\frac{1}{\bar{\delta }_{b,t}}\right)^{{\bar{r}_{b,t}}}}\hspace{-0.1cm}\sum\limits_{k=1}^{\bar{T}_b}\sum\limits_{s=1}^{\bar{r}_{b,k}}}{{{{\left( -1 \right)}^{{\bar{r}_{b,k}}-s}}\bar{\delta}_{k}}^{{\bar{r}_{b,k}}-s+1}\\
&\times{{\Psi }_{b,k,s,\bar{\mathbf{r}}_b}}\left[ 1-{{e}^{-\frac{\alpha_{\rm th}}{{\bar{\delta}_{b,k}}}}}\sum\limits_{u=0}^{{\bar{r}_{b,k}}-s}{\frac{{{\left( {\frac{\alpha_{\rm th}}{\bar{\delta }_{b,k}}} \right)}^{u}}}{u!}} \right],
\end{aligned}
\end{equation}
where ${{\Psi }_{b,k,s,\bar{\mathbf{r}}_b}}={{\left( -1 \right)}^{{\bar{r}_{b,k}}-1}}\sum\limits_{\mathbf{i}\in {{\Omega }_{b,k,s}}}\prod\limits_{n\ne k}\left( \begin{matrix}
			{{i}_{n}}+{\bar{r}_{b,n}}-1  \\
			{{i}_{n}}  \\
		\end{matrix} \right)$ $\times{{\left( \frac{1}{\bar{\delta }_{b,n}}-\frac{1}{\bar{\delta }_{b,k}} \right)}^{-\left( {{i}_{n}}+{\bar{r}_{b,n}} \right)}}$, $\mathbf{i}={{\left[ {{i}_{1}},\ldots ,{{i}_{\bar{T}_b}} \right]}^{\rm T}}$ and ${{\Omega}_{b,k,s}}=\left\{\left[{i}_{1},\ldots,{i}_{\bar{T}_b}\right]\in \mathbb{Z}^{\bar{T}_b};\sum\limits_{j=1}^{\bar{T}_b}{{{i}_{j}}=s-1,{{i}_{k}}=0,{{i}_{j}}\ge 0} \text{ for all } j\right\}$.
\end{corollary}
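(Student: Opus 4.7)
The plan is to obtain Corollary~\ref{Ltmodifed_beamdomain} as a direct specialization of Theorem~\ref{Theorem1}, so essentially no new probabilistic work is needed; the whole argument reduces to identifying how the codebook choice $\mathbf{W}_{\rm O}=\mathbf{U}$ collapses the quantities that appear in Eq.~\eqref{eqtl}.

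First I would plug $\mathbf{W}_{\rm O}=\mathbf{U}$ into the definition of the beam-domain covariance. Since $\mathbf{R}_{\mathbf{h}}=\mathbf{U}\boldsymbol{\Sigma}\mathbf{U}^{\rm H}$ is the compact eigendecomposition and $\mathbf{U}$ is semi-unitary, this gives $\mathbf{R}_{\bar{\mathbf{h}}}=\mathbf{U}^{\rm H}\mathbf{R}_{\mathbf{h}}\mathbf{U}=\boldsymbol{\Sigma}$, so a valid square root is $\mathbf{R}_{\bar{\mathbf{h}}}^{1/2}=\boldsymbol{\Sigma}^{1/2}={\rm diag}\{\sqrt{\delta_{1}},\dots,\sqrt{\delta_{r_M}}\}$. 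Because $\mathbb{A}_{b}=\{1,\dots,b\}$ in the modified scheme, the submatrix $\tilde{\mathbf{R}}_{b}=[\boldsymbol{\Sigma}^{1/2}]_{[1:b],\,[1:B]}$ consists simply of the first $b$ rows of the diagonal matrix $\boldsymbol{\Sigma}^{1/2}$, i.e. $\tilde{\mathbf{R}}_{b}=[\,{\rm diag}\{\sqrt{\delta_{1}},\dots,\sqrt{\delta_{b}}\}\;|\;\mathbf{0}\,]$.

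Next I would compute the compact eigendecomposition of $\tilde{\mathbf{R}}_{b}^{\rm H}\tilde{\mathbf{R}}_{b}$, which is immediate: this $B\times B$ matrix is diagonal with entries $\delta_{1},\dots,\delta_{b},0,\dots,0$. Hence its rank is $r_{b}=b$, its nonzero eigenvalues are exactly $\delta_{1},\dots,\delta_{b}$, and the eigenvectors $\mathbf{U}_{b}$ are just the first $b$ columns of $\mathbf{I}_{B}$. The crucial point to highlight here is that $\delta_{b,j}=\delta_{j}$ does not depend on the step index $b$, which is precisely the simplification claimed in the corollary statement. Accordingly, the set of \emph{distinct} eigenvalues of $\tilde{\mathbf{R}}_{b}^{\rm H}\tilde{\mathbf{R}}_{b}$ and their multiplicities are captured by $\bar{T}_{b}$, $\bar{\delta}_{b,t}$ and $\bar{r}_{b,t}$ as defined right before the corollary, and these are determined by the top $b$ eigenvalues of $\mathbf{R}_{\mathbf{h}}$ alone.

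Finally I would substitute $T_{b}\mapsto\bar{T}_{b}$, $r_{b,t}\mapsto\bar{r}_{b,t}$, $\mathbf{r}_{b}\mapsto\bar{\mathbf{r}}_{b}$ and $\delta_{b,t}\mapsto\bar{\delta}_{b,t}$ throughout Eq.~\eqref{eqtl}, including inside the definition of $\Psi_{b,k,s,\cdot}$, to obtain Eq.~\eqref{eqbeam}. Since Theorem~\ref{Theorem1} was already proved for an arbitrary codebook $\mathbf{W}_{\rm O}$, no extra distributional argument is required: the PDF representation in Eq.~\eqref{eq16} and the tail integral in Eq.~\eqref{eq17} carry over verbatim. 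I do not foresee a real obstacle here; the only bookkeeping point that needs a short sentence is the independence of the nonzero eigenvalues from $b$ (so that the $\delta_{b,j}$ may be written as $\delta_{j}$ and the distinct-value counts are monotone in $b$), which is what makes the specialization clean.
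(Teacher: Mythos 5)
Your proposal is correct and follows essentially the same route as the paper: the paper's proof of Corollary~\ref{Ltmodifed_beamdomain} is simply to specialize Theorem~\ref{Theorem1} to $\mathbf{W}_{\rm O}=\mathbf{U}$, using exactly the observations you make ($\mathbf{R}_{\bar{\mathbf{h}}}=\boldsymbol{\Sigma}$, $\tilde{\mathbf{R}}_b$ equal to the first $b$ rows of $\boldsymbol{\Sigma}^{1/2}$, hence $\delta_{b,j}=\delta_j$ independent of the step index, with the distinct values and multiplicities absorbed into $\bar{T}_b$, $\bar{\delta}_{b,t}$, $\bar{r}_{b,t}$). Your writeup just makes explicit the bookkeeping the paper leaves implicit.
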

\begin{proof}
The result can be directly obtained from Theorem \ref{Theorem1}.
\end{proof}

\begin{corollary}\label{Ltmodifed_beam_appro}
	For channels with the exponential covariance matrix in Eq. \eqref{eq:} and $0 \le  r < 1$, a large $M$ approximation of the average training length for the modified beam-domain interleaved training scheme can be written as
\setlength\abovedisplayskip{1pt}
\setlength\belowdisplayskip{1pt}
\begin{equation}\label{eq29}
	L_t =  1+\sum\limits_{m=1}^{M-1}{\sum\limits_{j=1}^{m}{{{l}_{j}}\left( 0 \right)}}\left( 1-{{e}^{-\frac{\alpha_{\rm th}}{{\delta }_{j}} }} \right),
\end{equation}
where ${{l}_{j}}\left( 0 \right)=\prod\limits_{k=1,k\ne j}^{m}{\frac{\delta_{j}}{\delta_{j}-\delta_{k}}}$ and
\setlength\abovedisplayskip{1pt}
\setlength\belowdisplayskip{1pt}
\begin{equation}\label{eq30}
	{{\delta }_{j}}\approx  \frac{1-{{r}^{2}}}{1+{{r}^{2}}+2r\cos \left( \frac{\left( M+r \right)(M+1-j)\pi }{M\left( M+1 \right)} \right)} 
\end{equation}
for $j=1,\ldots ,M$.
\end{corollary}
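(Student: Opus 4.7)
My proposal is to derive Corollary \ref{Ltmodifed_beam_appro} by specializing Corollary \ref{Ltmodifed_beamdomain} to the case of distinct eigenvalues and then invoking the known asymptotic eigenvalue formula for the Kac--Murdock--Szego matrix. There are two independent ingredients: (i) algebraic collapse of Eq. \eqref{eqbeam} when all multiplicities equal one, and (ii) the large-$M$ eigenvalue approximation \eqref{eq30}. Since $\mathbf{R}_\mathbf{h}$ in Eq. \eqref{eq:} is full-rank for $r<1$, we have $r_M=M$ and therefore $B=M$; combined with the fact that for $0<r<1$ the cosine arguments in \eqref{eq30} are strictly monotone in $j$, the first $b$ eigenvalues are pairwise distinct for every $b\le M$. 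The degenerate case $r=0$ reduces to the i.i.d.\ setting and can be obtained as a limiting case.

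For the algebraic collapse, I would set $\bar{T}_b=b$ and $\bar{r}_{b,t}=1$ for all $t$ in Corollary \ref{Ltmodifed_beamdomain}. This forces $s=1$ in the inner summation, and the index set $\Omega_{b,k,1}=\{\mathbf{i}:\sum_j i_j=0,\ i_k=0,\ i_j\ge 0\}$ reduces to the singleton $\{\mathbf{0}\}$. Consequently,
\begin{equation*}
\Psi_{b,k,1,\bar{\mathbf{r}}_b}=\prod_{n\ne k}\Bigl(\tfrac{1}{\delta_n}-\tfrac{1}{\delta_k}\Bigr)^{-1}=\prod_{n\ne k}\frac{\delta_n\delta_k}{\delta_k-\delta_n}.
\end{equation*}
Combining this with the prefactor $\prod_{t=1}^{b}\delta_t^{-1}$ and the factor $\delta_k$ (from $\bar{\delta}_{b,k}^{r_{b,k}-s+1}$), and noting that the $u$-sum reduces to the single term $u=0$, routine cancellation yields
\begin{equation*}
\Pr\bigl(\|\bar{\mathbf{h}}_{\mathbb{A}_b}\|^2<\alpha_{\rm th}\bigr)=\sum_{k=1}^{b}\frac{\delta_k^{\,b-1}}{\prod_{n\ne k}(\delta_k-\delta_n)}\bigl(1-e^{-\alpha_{\rm th}/\delta_k}\bigr)=\sum_{k=1}^{b}l_k(0)\bigl(1-e^{-\alpha_{\rm th}/\delta_k}\bigr),
\end{equation*}
where $l_k(0)$ is the Lagrange basis polynomial at $0$ for the nodes $\delta_1,\dots,\delta_b$. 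Substituting into the outer summation of Eq. \eqref{eqbeam} with $B=M$ gives Eq. \eqref{eq29}. This step is essentially bookkeeping, made possible by the distinctness of the eigenvalues.

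The main obstacle is establishing the eigenvalue approximation \eqref{eq30}. The matrix in Eq. \eqref{eq:} is the classical Kac--Murdock--Szego (KMS) matrix, whose spectrum does not admit a simple closed form but has been shown asymptotically (for large $M$) to be well approximated by shifting the Chebyshev-like nodes $\cos\bigl(j\pi/(M+1)\bigr)$ by an $O(1/M)$ correction that accounts for the unequal diagonal and off-diagonal decay. I would either invoke an existing asymptotic spectral result for KMS matrices and verify that it matches \eqref{eq30}, or derive it by solving the characteristic secular equation for the tridiagonal inverse $\mathbf{R}_{\mathbf h}^{-1}$ (which admits a simple closed form) and approximating the transcendental roots to leading order in $1/M$. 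Once \eqref{eq30} is in hand, it plugs directly into the expression from the previous paragraph and completes the proof.
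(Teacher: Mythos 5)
Your proposal is correct and follows essentially the same route as the paper: the paper likewise observes that the large-$M$ eigenvalue approximation (cited from Mallik's eigen-analysis of the exponential correlation matrix, which is your KMS asymptotic result) yields strictly decreasing, hence distinct, eigenvalues, sets $\bar{T}_b=b$ and $\bar{r}_{b,t}=1$ in Corollary~\ref{Ltmodifed_beamdomain}, and substitutes to obtain Eq.~\eqref{eq29}. Your explicit working of the algebraic collapse to the Lagrange-basis coefficients $l_k(0)$ (which the paper leaves implicit) and your remark on treating $r=0$ as a limit are both sound.
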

\begin{proof}
For the exponential covariance matrix ${{\mathbf{R}}_{{{\mathbf{h}}}}}$, its eigenvalues ${{\delta }_{j}}$ for $M\gg 1$ can be approximated as Eq. \eqref{eq30} by following \cite[Eq. (51)]{mallik2018exponential}. 
According to the monotonicity of $\cos(x)$ for $0<x<\pi$, we have $\delta_1>\delta_2>\cdots>\delta_M>0$, and thus $B=r_M = M$, ${\bar T}_b=b$, $\bar{\delta}_{b,t}={\delta}_{t}$, ${\bar{r}_{b,t}}=1$ for $t=1,...,\bar{T}_b$. Via substituting these into Eq. \eqref{eqbeam} in Corollary \ref{Ltmodifed_beamdomain}, Eq. \eqref{eq29} can be obtained.
\end{proof}

The results in Eq. \eqref{eqtl},  Eq. \eqref{eqbeam} and  Eq. \eqref{eq29} are in closed-form and can be used to evaluate the average training length for different system parameter values.

In the following, we discuss the impact of the antenna number $M$ on the average training length of the modified beam-domain interleaved training.
\begin{corollary}\label{be_Lt_M}
For the modified beam-domain interleaved training scheme,  
when the use of all beams can avoid an outage, the average training length $L_t$ is a non-increasing function of the antenna number $M$, in both one-ring correlated channels with non-zero AS and exponentially correlated channels.
\end{corollary}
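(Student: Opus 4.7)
My plan is to compare $L_t$ for the $M$- and $(M+1)$-antenna systems through the expression in Eq.~\eqref{eq11}, $L_t = 1 + \sum_{b=1}^{B-1}\Pr(\|\bar{\mathbf{h}}_{\mathbb{A}_b}\|^2 < \alpha_{\rm th})$, and to bound each probability for the $(M+1)$-system by its counterpart in the $M$-system via a coupling driven by the Cauchy interlacing theorem applied to the channel covariance eigenvalues.

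First I would establish a nesting property of the covariance. For the one-ring model in Eq.~\eqref{eq_onering}, $[\mathbf{R}_{\mathbf{h}}]_{m,n}=\int_{\Theta_{\rm min}}^{\Theta_{\rm max}} g(\theta) e^{-j2\pi D\sin(\theta)(m-n)}d\theta$ depends only on $m-n$; the exponential model in Eq.~\eqref{eq:} has the same property. Hence $\mathbf{R}_{\mathbf{h}}^{(M)}$ is the leading principal submatrix of $\mathbf{R}_{\mathbf{h}}^{(M+1)}$. Non-zero AS together with the Vandermonde structure of $\boldsymbol{\alpha}(\theta)$ forces $r_M=M$ in the one-ring case, and the exponential covariance is always full rank, so $B=M$ in both settings. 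Cauchy interlacing then yields the ordered-eigenvalue dominance
\begin{equation*}
\delta_j^{(M+1)} \geq \delta_j^{(M)} \quad \text{for } j=1,\ldots,M.
\end{equation*}

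Next I would set up a pathwise coupling. In the modified scheme $\bar{\mathbf{h}}\sim\mathcal{CN}(\mathbf{0},\boldsymbol{\Sigma})$ with diagonal $\boldsymbol{\Sigma}$, so $\|\bar{\mathbf{h}}_{\mathbb{A}_b}\|^2 \cong \sum_{j=1}^b \delta_j X_j$ with $X_j$ i.i.d.\ unit-mean exponentials. Realizing both systems on the \textbf{same} sequence $\{X_j\}$ and using the eigenvalue dominance above gives $\|\bar{\mathbf{h}}_{\mathbb{A}_b}^{(M+1)}\|^2 \geq \|\bar{\mathbf{h}}_{\mathbb{A}_b}^{(M)}\|^2$ almost surely for every $b\leq M$. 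Consequently $\Pr(\|\bar{\mathbf{h}}_{\mathbb{A}_b}^{(M+1)}\|^2<\alpha_{\rm th})\leq \Pr(\|\bar{\mathbf{h}}_{\mathbb{A}_b}^{(M)}\|^2<\alpha_{\rm th})$ for $b=1,\ldots,M-1$. The $(M+1)$-system's $L_t$ contains one additional summand at $b=M$; here I invoke the hypothesis that training all beams of the $M$-system avoids outage, which means $\|\bar{\mathbf{h}}^{(M)}\|^2\geq\alpha_{\rm th}$ a.s., and the coupling then gives $\|\bar{\mathbf{h}}_{\mathbb{A}_M}^{(M+1)}\|^2\geq\|\bar{\mathbf{h}}^{(M)}\|^2\geq\alpha_{\rm th}$ a.s., so this extra term vanishes. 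Summing the bounds in Eq.~\eqref{eq11} delivers $L_t^{(M+1)}\leq L_t^{(M)}$.

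The main subtlety I anticipate is aligning the decreasing-eigenvalue indexing used by the modified scheme across the two systems so that Cauchy interlacing applies directly and the coupling does what I want; once this is set up the remaining steps are mechanical. A secondary point worth a brief remark is the rank claim for the one-ring model with non-zero AS, which reduces to a standard Vandermonde-type linear-independence argument for $\boldsymbol{\alpha}(\theta)$ evaluated at distinct angles in $[\Theta_{\rm min},\Theta_{\rm max}]$.
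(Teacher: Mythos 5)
Your proposal is correct and follows essentially the same route as the paper's proof: both express $L_t$ as $1+\sum_b \Pr\left(\sum_{j=1}^b \delta_j |h_{{\rm iid},j}|^2 <\alpha_{\rm th}\right)$, invoke the Cauchy/eigenvalue interlacing theorem on the nested covariance matrices to get $\delta_j^{(M+1)}\ge\delta_j^{(M)}$, conclude that each probability term is non-increasing in $M$, and use the no-outage-with-all-beams hypothesis to kill the one extra summand. Your explicit pathwise coupling on a common sequence of exponentials is just a more formal statement of the stochastic dominance the paper asserts directly, and your full-rank observation for the two models is a harmless specialization of the paper's two-case treatment of $r_{M+1}$.
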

\begin{proof}
Denote the eigenvalues of the channel covariance matrix $\mathbf{R}_{\mathbf{h}}$ in descending order as $\lambda_{1},\lambda_{2},...,\lambda_{r_M}$ and $\Lambda_{1}, \Lambda_{2}, ..., \Lambda_{r_{M+1}}$ for the number of antennas being $M$ and $M+1$, respectively.
Since the channel covariance matrix for $M$ BS antennas is a submatrix of that for $M+1$ BS antennas, we have either $r_{M+1}=r_M+1$, e.g., for channels with the full-rank exponential covariance matrix in Eq. \eqref{eq:}, or $r_{M+1}=r_M$. 
From Eq. \eqref{eq11} and Eq. \eqref{eq15}, the average training length can be expressed as $
L_t = 1+\sum_{b=1}^{r_M-1}{\Pr \left( { \sum_{j=1}^{b}{{{\left| {{h}_{{\rm iid},j}} \right|}^{2}}{{\delta }_{j}}}}<\alpha_{\rm th}  \right)}$.
Then the difference between the average training lengths for systems with $M$ and $M+1$ BS antennas for the case of $r_{M+1}=r_M+1$ is    
\setlength\abovedisplayskip{1pt}
\setlength\belowdisplayskip{1pt}
\begin{equation}\label{delta_Lt}
\hspace{-0.2cm}L_t(M+1)-L_t(M) = \Delta + \Pr \left( { \sum\limits_{j=1}^{r_{M}}{{{\left| {{h}_{{\rm iid},j}} \right|}^{2}}{{\Lambda }_{j}}}}<\alpha_{\rm th}  \right), 
\end{equation}
where $\Delta=\sum_{b=1}^{r_M-1}\left[\Pr \left( { \sum_{j=1}^{b}{{{\left| {{h}_{{\rm iid},j}} \right|}^{2}}{{\Lambda }_{j}}}}<\alpha_{\rm th}  \right)-\Pr \left(  \right.\right.$ $\left.\left.\sum_{j=1}^{b}{{{\left| {{h}_{{\rm iid},j}} \right|}^{2}}{{\lambda }_{j}}}<\alpha_{\rm th}  \right)\right]$.
From the Eigenvalue Interlacing Theorem \cite{hwang2004cauchy}, we have $\Lambda_{r_{M+1}}\le\lambda_{r_M}\le\Lambda_{r_M}\le\lambda_{r_M-1}\le\Lambda_{r_M-1}\le\dots\le\lambda_{2}\le\Lambda_{2}\le\lambda_{1}\le\Lambda_{1}$. Thus $\Delta\le0$. The condition that the use of all beams can meet the transmission requirement leads to $\Pr \left( { \sum_{j=1}^{r_{M}}{{{\left| {{h}_{{\rm iid},j}} \right|}^{2}}{{\Lambda }_{j}}}}<\alpha_{\rm th}  \right)=0$.
Therefore, $L_t$ is non-increasing with increasing $M$ under this condition.
For the case of $r_{M+1} = r_M$, this conclusion still stands due to $L_t(M+1)-L_t(M) = \Delta$.
\end{proof}

\begin{remark}\label{be_Lt_M_re}
Numerical simulation based on Eq. \eqref{eqbeam} shows that the average training length $L_t$ increases with $M$ for small $M$; while for large $M$, $L_t$ decreases with $M$ and converges to a constant value. This is because when $M$ is small, $\Pr \left( { \sum_{j=1}^{r_{M}}{{{\left| {{h}_{{\rm iid},j}} \right|}^{2}}{{\Lambda }_{j}}}}<\alpha_{\rm th}  \right)$ is the dominant term in $L_t(M+1)-L_t(M)$ in Eq. \eqref{delta_Lt}, which has a positive value. When $M$ is large, $\Pr \left( { \sum_{j=1}^{r_{M}}{{{\left| {{h}_{{\rm iid},j}} \right|}^{2}}{{\Lambda }_{j}}}}<\alpha_{\rm th}  \right)\rightarrow 0$, therefore, as shown in Corollary \ref{be_Lt_M}, $L_t$ decreases with $M$. 	
\end{remark}

Next, we discuss the effect of the channel correlation on the average training length of the modified beam-domain interleaved training scheme for channels with the exponential covariance matrix.
Numerical simulation based on Eq. \eqref{eq29} shows that for relatively small $\alpha_{\rm th}$, higher channel correlation helps reduce the average training length of the modified beam-domain interleaved training. However, as $\alpha_{\rm th}$ continues to increase, an increase in the channel correlation may have the opposite effect.
According to the derivative of eigenvalues in Eq. \eqref{eq30}, i.e., ${{\delta }_{j}}=\frac{1-{{r}^{2}}}{1+{{r}^{2}}+2r\cos \left( \frac{\left( M+r \right)(M+1-j)\pi }{M\left( M+1 \right)} \right)}, \forall j=1, \ldots, M$,  with respect to $r$, larger eigenvalues increase, while smaller eigenvalues become smaller, as $r$ increases.
For very large $\alpha_{\rm th}$, $\Pr \left( { \sum_{j=1}^{b}{{{\left| {{h}_{{\rm iid},j}} \right|}^{2}}{{\delta }_{j}}}}<\alpha_{\rm th}  \right) \approx 1$ for $b< r_M-1$, and $\Pr \left( { \sum_{j=1}^{r_M-1}{{{\left| {{h}_{{\rm iid},j}} \right|}^{2}}{{\delta }_{j}}}}<\alpha_{\rm th}  \right)$  has the greatest impact on $L_t$. In this case, smaller $r$ results in flatter eigenvalue distribution which provides lower $\Pr \left( { \sum_{j=1}^{r_M-1}{{{\left| {{h}_{{\rm iid},j}} \right|}^{2}}{{\delta }_{j}}}}<\alpha_{\rm th}  \right)$ and shorter $L_t$. For small enough $\alpha_{\rm th}$, $\Pr \left( { \sum_{j=1}^{b}{{{\left| {{h}_{{\rm iid},j}} \right|}^{2}}{{\delta }_{j}}}}<\alpha_{\rm th}  \right) \approx 0$ for $2<b\le r_M-1$, and $\Pr \left( { {{{\left| {{h}_{{\rm iid},1}} \right|}^{2}}{{\delta }_{1}}}}<\alpha_{\rm th}  \right)$  has the greatest impact on $L_t$.  In this case, larger $r$ results in higher ${\delta }_{1}$ and shorter $L_t$.

\section{Modified Antenna-Domain Interleaved Training and Performance Analysis}\label{Modified Antenna-domain Interleaved Training}

In this section, we first discuss the impact of channel correlation on the average training length of the basic antenna-domain interleaved training. Then we derive the conditional distribution of channels of un-trained BS antennas based on channel values of the already trained BS antennas during the interleaved training process, based on which we further propose the design of the modified antenna-domain interleaved training.

\subsection{Average Training Length Analysis and Impact of Channel Correlation}\label{Impact of Channel Correlation}

In the following, we give closed-form expressions of the average training length of the basic antenna-domain interleaved training scheme under general correlated channels and exponentially correlated channels respectively.

Compared to Theorem \ref{Theorem1}, the only difference in the derivation on the average training length of the basic antenna-domain interleaved training is that the covariance matrix of the trained channels after $m$ training steps is $\tilde{\mathbf{R}}_m = \left[\mathbf{R}_{{{\mathbf{h}}}}^{\frac{1}{2}} \right]_{[1:m],[1:M]}$ and the vector of the trained channels can be represented as ${{\mathbf{h}}_{\mathbb{A}_{m}}}\cong \tilde{\mathbf{R}}_m\mathbf{h}_{\rm iid}$. 
Consider the compact eigenvalue decomposition: $\tilde{\mathbf{R}}_m^{{\rm H}}\tilde{\mathbf{R}}_m={{\mathbf{U}}_{m}} {{\mathbf{\Sigma }}_{m}} \mathbf{U}_{m}^{{\rm H}}$ where ${{\mathbf{\Sigma }}_{m}}= {\rm diag}\left\{ {{\delta }_{m,1}},\ldots ,{{\delta }_{m,r_m}}\right\}$ and $r_m$ is the rank of $\tilde{\mathbf{R}}_m^{{\rm H}}\tilde{\mathbf{R}}_m$. Then we have  ${{\left\|{{\mathbf{h}}_{\mathbb{A}_{m}}} \right\|}^{2}}\cong{{ \tilde{\mathbf{{\rm h}}}_{\rm iid}^{m,{\rm H}} }}{{\mathbf{\Sigma }}_{m}}\tilde{\mathbf{h}}_{\rm iid}^{m}$ with $\tilde{\mathbf{h}}_{\rm iid}^{m}\sim \mathcal{C}\mathcal{N}\left( \mathbf{0},{{\mathbf{I}}_{r_m}} \right)$.
Suppose that there are $T_m$ different eigenvalues with value of $\bar{\delta}_{m,t}$ and repeated times of $r_{m,t}$ for  $t=1,...,T_m$. Define $\mathbf{r}_m=[{{r}_{m,1}},...,{{r}_{m,T_m}}]^{\rm T}$. 

\begin{theorem}\label{Theorem2}
 The average training length of the basic antenna-domain interleaved training scheme under general correlated channels can be expressed as 
\setlength\abovedisplayskip{1pt}
\setlength\belowdisplayskip{1pt}
\begin{equation}\label{eqLt_basic_ant}
\begin{aligned}
	\hspace{-0.3cm}L_t=1+\hspace{-0.1cm}&\sum\limits_{m=1}^{M-1}\hspace{-0.05cm}{\prod\limits_{t=1}^{T_m}\hspace{-0.1cm}\left(\frac{1}{{\bar{\delta }_{m,t}}}\right)^{{{r}_{m,t}}}}\hspace{-0.1cm}\sum\limits_{k=1}^{T_m}\hspace{-0.1cm}\sum\limits_{s=1}^{{{r}_{m,k}}}\hspace{-0.1cm}{{\left( -1 \right)}^{{{r}_{m,k}}-s}}{{\bar{\delta }_{m,k}}}^{{{r}_{m,k}}-s+1}\\
&\times{{\Psi }_{m,k,s,\mathbf{r}_m}}\left[ 1-{{e}^{-\frac{\alpha_{\rm th}}{{\bar{\delta }_{m,k}}}}}\sum\limits_{u=0}^{{{r}_{m,k}}-s}{\frac{{{\left(\frac{\alpha_{\rm th}}{{\bar{\delta }_{m,k}}}  \right)}^{u}}}{u!}} \right],
\end{aligned}
\end{equation}
where 
${{\Psi }_{m,k,s,\mathbf{r}_m}}={{\left( -1 \right)}^{{{r}_{m,k}}-1}}\sum\limits_{\mathbf{i}\in {{\Omega }_{m,k,s}}}\prod\limits_{n\ne k}\left( \begin{matrix}
{{i}_{n}}+{{r}_{m,n}}-1 \\
{{i}_{n}} \\
\end{matrix} \right)$ $\times{{\left( \frac{1}{{\bar{\delta }_{m,n}}}-\frac{1}{{\bar{\delta }_{m,k}}} \right)}^{-\left( {{i}_{n}}+{{r}_{m,n}} \right)}}$, $\mathbf{i}\hspace{-0.025cm}=\hspace{-0.025cm}{{\left[ {{i}_{1}},\ldots ,{{i}_{T_m}} \right]}^{\rm T}}$, and ${{\Omega }_{m,k,s}}=\left\{ \left[ {{i}_{1}},\ldots ,{{i}_{T_m}} \right]\in {{\mathbb{Z}}^{T_m}};\sum\limits_{j=1}^{T_m}{{{i}_{j}}=s-1,{{i}_{k}}=0,{{i}_{j}}\ge 0} \text{ for all } j \right\}.$
\end{theorem}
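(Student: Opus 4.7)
The plan is to observe that Theorem 2 is essentially the specialization of Theorem 1 to the case $\mathbf{W}_{\rm O} = \mathbf{I}_M$, so the argument mirrors the previous proof almost line-by-line. Concretely, I would start from Eq. \eqref{eq11}, which was derived for a generic trained-channel vector $\tilde{\mathbf{h}}_{\mathbb{A}_l}$ and therefore applies verbatim to the antenna-domain setting after identifying $\tilde{\mathbf{h}} = \mathbf{h}$ and $L = M$. The task is thereby reduced to evaluating $\Pr(\|\mathbf{h}_{\mathbb{A}_m}\|^2 < \alpha_{\rm th})$ for $m = 1, \ldots, M-1$.

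Next, using the representation $\mathbf{h} \cong \mathbf{R}_{\mathbf{h}}^{1/2} \mathbf{h}_{\rm iid}$ with $\mathbf{h}_{\rm iid} \sim \mathcal{CN}(\mathbf{0}, \mathbf{I}_M)$, together with the fact that the basic scheme trains antennas in their original order so that $\mathbb{A}_m = [1:m]$, I would write $\mathbf{h}_{\mathbb{A}_m} \cong \tilde{\mathbf{R}}_m \mathbf{h}_{\rm iid}$ with $\tilde{\mathbf{R}}_m = [\mathbf{R}_{\mathbf{h}}^{1/2}]_{[1:m],[1:M]}$ as in the statement. Plugging in the compact eigendecomposition $\tilde{\mathbf{R}}_m^{\rm H}\tilde{\mathbf{R}}_m = \mathbf{U}_m \boldsymbol{\Sigma}_m \mathbf{U}_m^{\rm H}$ and exploiting the unitary invariance of the i.i.d. complex Gaussian distribution, I would obtain the chi-square decomposition $\|\mathbf{h}_{\mathbb{A}_m}\|^2 \cong \sum_{t=1}^{T_m} \tfrac{1}{2}\bar{\delta}_{m,t} Q_{m,t}$ with independent $Q_{m,t} \sim \chi^2(2 r_{m,t})$, mirroring Eq. \eqref{eq15} with beam indices replaced by antenna indices.

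From here the remainder is purely computational. I would invoke the closed-form PDF for a weighted sum of independent chi-square random variables from \cite{bjornson2009exploiting} (as used in Eq. \eqref{eq16}) to write down $f(\|\mathbf{h}_{\mathbb{A}_m}\|^2 = x)$, and then integrate term-by-term from $0$ to $\alpha_{\rm th}$ via $\int_0^{\alpha_{\rm th}} x^{r_{m,k}-s} e^{-x/\bar{\delta}_{m,k}} dx$ using the standard lower-incomplete-gamma identity. This produces the bracketed factor $1 - e^{-\alpha_{\rm th}/\bar{\delta}_{m,k}} \sum_{u=0}^{r_{m,k}-s} (\alpha_{\rm th}/\bar{\delta}_{m,k})^u / u!$, in exact analogy with Eq. \eqref{eq17}, and substituting the resulting CDF back into Eq. \eqref{eq11} yields Eq. \eqref{eqLt_basic_ant}. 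There is essentially no genuine obstacle: every step replicates the corresponding step in the proof of Theorem \ref{Theorem1}, the only substantive change being that the trained sub-vector's covariance is read off directly from $\mathbf{R}_{\mathbf{h}}$ rather than from $\mathbf{R}_{\bar{\mathbf{h}}} = \mathbf{W}_{\rm O}^{\rm H} \mathbf{R}_{\mathbf{h}} \mathbf{W}_{\rm O}$. The one mild care-point is to confirm that $\mathbb{A}_m = [1:m]$ under the basic antenna-domain Algorithm \ref{algorithm_1}, which is immediate from its sequential design.
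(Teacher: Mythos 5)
Your proposal is correct and follows exactly the route the paper takes: the paper's own proof of Theorem~2 simply refers back to the proof of Theorem~1, with the preceding text noting the same substitution you make, namely $\tilde{\mathbf{R}}_m = \left[\mathbf{R}_{\mathbf{h}}^{1/2}\right]_{[1:m],[1:M]}$ and $\mathbf{h}_{\mathbb{A}_m}\cong\tilde{\mathbf{R}}_m\mathbf{h}_{\rm iid}$ in place of the beam-domain quantities. Your explicit check that $\mathbb{A}_m=[1:m]$ under the basic scheme is a reasonable added care-point but changes nothing of substance.
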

\begin{proof}
Please refer to the proof of Theorem \ref{Theorem1}.	
\end{proof}

To analyze the impact of channel correlation on the average training length, we consider two extreme cases, i.e., the i.i.d. channels with ${{\delta }_{m,i}}=1, \forall m=1,...,M, i=1,...,l$ and the fully correlated channels with ${{\delta }_{m,1}}=m$ and ${{\delta }_{m,i}}=0, \forall i=2,...,m$ for $m=1,...,M$.

For the i.i.d. channels, we have $r_m=m$, ${T}_m=1$, $\bar{\delta}_{m,1}=1$, ${{r}_{m,1}}=m$ and ${{\Psi }_{m,k,s,\mathbf{r}_m}}=\left(-1\right)^{m-1}, \text{ for }s=1$; and ${{\Psi }_{m,k,s,\mathbf{r}_m}}=0, \text{ for }s=2,\dots,m$. Via substituting these into Eq. \eqref{eqLt_basic_ant} in Theorem \ref{Theorem2}, we can obtain
\setlength\abovedisplayskip{1pt}
\setlength\belowdisplayskip{1pt}
\begin{equation}\label{Ltiid}
L_t^{{\rm (i.i.d.)}}=1+\sum\limits_{m=1}^{M-1}{\left(1-e^{-\alpha_{\rm th}}\sum\limits_{i=0}^{m-1}{\frac{\alpha_{\rm th}^i}{i!} }\right)}.
\end{equation}
According to the result in \cite[Theorem 2]{koyuncu2018interleaving}, we have $L_t^{{\rm (i.i.d.)}}\le 1+\alpha_{\rm th}$ for $M\rightarrow \infty$.

For the fully correlated channels, we have $r_m=1$, ${T}_m=1$, $\bar{\delta}_{m,1}=m$, ${{r}_{m,1}}=1$ and ${{\Psi }_{m,k,s,\mathbf{r}_m}}=1, \text{ for }s=1$; and ${{\Psi }_{m,k,s,\mathbf{r}_m}}=0, \text{ for }s=2,\dots,m$. Via substituting these into Eq. \eqref{eqLt_basic_ant} in Theorem \ref{Theorem2}, we can obtain
\setlength\abovedisplayskip{1pt}
\setlength\belowdisplayskip{1pt}
\begin{equation}\label{Ltfc}
L_t^{{\rm (FC)}}=1+\sum\limits_{m=1}^{M-1}{\left(1-e^{-\frac{\alpha_{\rm th}}{m}}\right)}.
\end{equation}
And the behavior of $L_t^{{\rm (FC)}}$ for $M\rightarrow \infty$ is given in the following corollary.
\begin{corollary}
	For $M\rightarrow \infty$, we have $1+\alpha_{\rm th}\gamma-\frac{\pi^2}{12}\alpha_{\rm th}^2\le L_t^{{\rm (FC)}}-\alpha_{\rm th}\ln M\le 1+\alpha_{\rm th}\gamma$, where $\gamma\approx0.5772$ is the Euler's constant.
\end{corollary}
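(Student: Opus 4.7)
The plan is to work directly from the explicit formula $L_t^{\rm (FC)} = 1 + \sum_{m=1}^{M-1}(1 - e^{-\alpha_{\rm th}/m})$ in Eq. \eqref{Ltfc} and compare $L_t^{\rm (FC)} - \alpha_{\rm th}\ln M$ with the harmonic sum $H_{M-1} = \sum_{m=1}^{M-1}\frac{1}{m}$ via two-sided Taylor estimates of $1-e^{-x}$ applied termwise at $x = \alpha_{\rm th}/m$.

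For the upper bound, I would use the elementary inequality $1 - e^{-x} \le x$ valid for all $x \ge 0$, so that
\begin{equation*}
L_t^{\rm (FC)} - \alpha_{\rm th}\ln M \;\le\; 1 + \alpha_{\rm th} H_{M-1} - \alpha_{\rm th}\ln M.
\end{equation*}
Then I would invoke the classical asymptotic $H_{M-1} = \ln M + \gamma + O(1/M)$, which yields $L_t^{\rm (FC)} - \alpha_{\rm th}\ln M \le 1 + \alpha_{\rm th}\gamma$ in the limit $M\to\infty$. For the lower bound, I would use the complementary inequality $1 - e^{-x} \ge x - \tfrac{x^2}{2}$, which follows from the integral-form Taylor remainder (or directly from the convexity of $e^{-x}$), to get
\begin{equation*}
L_t^{\rm (FC)} \;\ge\; 1 + \alpha_{\rm th}H_{M-1} - \frac{\alpha_{\rm th}^2}{2}\sum_{m=1}^{M-1}\frac{1}{m^2}.
\end{equation*}
Letting $M\to\infty$ and using both $H_{M-1} - \ln M \to \gamma$ and $\sum_{m=1}^{\infty} m^{-2} = \pi^2/6$ (the Basel sum) then gives $L_t^{\rm (FC)} - \alpha_{\rm th}\ln M \ge 1 + \alpha_{\rm th}\gamma - \frac{\pi^2}{12}\alpha_{\rm th}^2$, completing the sandwich.

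This proof is short and mostly mechanical; the main (minor) obstacle is choosing the right order of the Taylor estimate for $1-e^{-x}$. A first-order expansion is enough for the upper bound but produces an irreducible error on the lower side, so one must stop at the quadratic term to control the discrepancy uniformly in $m$ while still summing to a finite constant, which is precisely why the Basel constant $\pi^2/6$ appears in the lower bound. All other steps amount to invoking well-known identities for $H_M$ and $\zeta(2)$, so no delicate estimate is required beyond verifying that the $O(1/M)$ remainder in $H_{M-1} - \ln M$ vanishes in the limit.
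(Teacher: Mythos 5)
Your proposal is correct and follows essentially the same route as the paper: the paper's manipulation with $G(x)=x+e^{-x}$ and $-\int_0^{\alpha_{\rm th}/m}(1-e^{-u})\,du\ge-\alpha_{\rm th}^2/(2m^2)$ is just its way of establishing the two-sided bound $x-\tfrac{x^2}{2}\le 1-e^{-x}\le x$ that you invoke directly, after which both arguments conclude identically via the Basel sum $\sum_{m\ge1}m^{-2}=\pi^2/6$ and the definition of Euler's constant.
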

\begin{proof}
Define $G(x)=x+e^{-x}$. For $m>0$, we have $1-e^{-\frac{\alpha_{\rm th}}{m}}-\frac{\alpha_{\rm th}}{m}=-\left(G\left(\frac{\alpha_{\rm th}}{m}\right)-G(0)\right)=-\int_{0}^{\frac{\alpha_{\rm th}}{m}}G'(u)du=-\int_{0}^{\frac{\alpha_{\rm th}}{m}}\left(1-e^{-u}\right)du\ge -\int_{0}^{\frac{\alpha_{\rm th}}{m}}udu=-\frac{\alpha_{\rm th}^2}{2m^2}$. Meanwhile, we have $1-e^{-\frac{\alpha_{\rm th}}{m}}-\frac{\alpha_{\rm th}}{m}\le 0$. Therefore, we can obtain $-\sum_{m=1}^{M-1}{\frac{\alpha_{\rm th}^2}{2m^2}}\le \sum_{m=1}^{M-1}{\left(1-e^{-\frac{\alpha_{\rm th}}{m}}\right)}-\sum_{m=1}^{M-1}{\frac{\alpha_{\rm th}}{m}}\le 0 $. Since $\sum_{m=1}^{\infty}{\frac{1}{m^2}}=\frac{\pi^2}{6}$ and $\gamma=\lim\limits_{M\to\infty}\left(\sum_{m=1}^{M}{\frac{1}{m}}-\ln M\right)\approx0.5772$, we have $1+\alpha_{\rm th}\gamma-\frac{\pi^2}{12}\alpha_{\rm th}^2\le L_t^{{\rm (FC)}}-\alpha_{\rm th}\ln M\le 1+\alpha_{\rm th}\gamma$ for $M\rightarrow \infty$.	
\end{proof}
\begin{remark}
When $M$ increases asymptotically, under independent channels, the average training length of the basic scheme $L_t^{{\rm (i.i.d.)}}$ has the upper bound $1+\alpha_{\rm th}$; while under fully correlated channels, the average training length of the basic scheme $L_t^{{\rm (FC)}}$ is proportional to $\ln M$, which implies the negative effect of channel correlation to the average training length of the basic training scheme.
Further, numerical calculations based on Eq. \eqref{Ltiid} and \eqref{Ltfc} show that when $\alpha_{\rm th}$ is small compared to $M$, the basic scheme has a shorter average training length in the i.i.d. channels. This mainly benefits from the higher antenna diversity gain. 
After $\alpha_{\rm th}$ reaches about the same size as $M$, 
the basic scheme has a shorter average training length in the fully correlated channels, where the consistency of antenna energy is more important than the diversity gain.
\end{remark}

\begin{corollary}\label{Lt_basic_antenna_appro}
For channels with the exponential covariance matrix in Eq. \eqref{eq:} and almost all values of $0< r<1$, the average training length of the basic antenna-domain interleaved training can be expressed as
\setlength\abovedisplayskip{1pt}
\setlength\belowdisplayskip{1pt}
\begin{equation}\label{eqo}
	L_t = 2-{{e}^{-\alpha_{\rm th} }}+\sum\limits_{m=2}^{M-1}{\sum\limits_{j=1}^{m}{{{l}_{m,j}}\left( 0 \right)}}\left( 1-{{e}^{-\frac{\alpha_{\rm th}}{{\delta }_{m,j}} }} \right),
\end{equation}
where ${{l}_{m,j}}\left( 0 \right)=\prod\limits_{k=1,k\ne j}^{m}{\frac{{\delta_{m,j}}}{{\delta_{m,j}}-{\delta_{m,k}}}}$ and
\setlength\abovedisplayskip{1pt}
\setlength\belowdisplayskip{1pt}
\begin{equation}\label{eqjm-1}
	{{{\delta }_{m,j}}}\approx
		\begin{cases}
		1-2r\cos \left( \frac{j\pi }{m+1} \right),  &\text{if } 0<r\ll 1
		\\  
		\frac{1-r}{2}{{\sec }^{2}}\left( \frac{j\pi }{2m} \right),  &\text{if } 0<1-r\ll 1 \text{ }\&\\
&\hspace{0.5em}{r\ne 1-\frac{6 m}{3 \sec ^2\left(\frac{j \pi}{2 m}\right)+2\left(m^2-1\right)}}\\		 
	 	\Phi(r,m,j), &\text{else} \text{ }\&\text{ } m-\sum_{i=1}^{m-1}\Phi(r,m,i) \\
&\hspace{3em}\ne \Phi(r,m,j) \\ 
	\end{cases}
\end{equation}
for $j=1,\ldots ,m-1$ and
\setlength\abovedisplayskip{1pt}
\setlength\belowdisplayskip{1pt}
\begin{equation}\label{eqmm}
	{{\delta }_{m,m}}\approx
	\begin{cases}
		1-2r\cos \left( \frac{m\pi }{m+1} \right),  &\text{if   } 0<r\ll 1\\  
		m-\frac{\left( {{m}^{2}}-1 \right)\left( 1-r \right)}{3},  &\text{if   } 0<1-r\ll 1\\
		m-\sum\limits_{i=1}^{m-1}\Phi(r,m,i), &\text{else}\\
	\end{cases}
\end{equation}
where 
$\Phi(r,m,i)\triangleq \frac{1-r^2}{1+r^2+2 r^2 \cos \left(\frac{i \pi}{m}\right)+2 r(1-r) \cos \left(\frac{i \pi}{m+1}\right)}$.
\end{corollary}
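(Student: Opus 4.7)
The plan is to specialize Theorem \ref{Theorem2} to the exponential covariance structure and then justify the piecewise asymptotic formulas for the eigenvalues. The first observation is that $\tilde{\mathbf{R}}_m\tilde{\mathbf{R}}_m^{\rm H}=[\mathbf{R}_{\mathbf{h}}]_{[1:m],[1:m]}$, which is itself an $m\times m$ exponential/Kac--Murdock--Szego-type matrix whose eigenvalues coincide with the non-zero eigenvalues $\{\delta_{m,j}\}_{j=1}^m$ of $\tilde{\mathbf{R}}_m^{\rm H}\tilde{\mathbf{R}}_m$. For all but a discrete (hence measure-zero) set of $r\in(0,1)$, this matrix has a simple spectrum, which is exactly what the ``almost all $r$'' caveat encodes. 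In that regime I can set $T_m=m$ and $r_{m,t}=1$ for every $t$ throughout Theorem \ref{Theorem2}.

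With every eigenvalue simple, the formula in Eq. \eqref{eqLt_basic_ant} collapses: only $s=1$ survives, $\Omega_{m,k,1}$ reduces to the single all-zero tuple, and a direct calculation gives $\Psi_{m,k,1,\mathbf{r}_m}=\prod_{n\ne k}(1/\bar{\delta}_{m,n}-1/\bar{\delta}_{m,k})^{-1}$. Combining the prefactor $\prod_t 1/\bar{\delta}_{m,t}$ with $\bar{\delta}_{m,k}$ and this product collapses algebraically into exactly the Lagrange basis value $l_{m,j}(0)=\prod_{k\ne j}\delta_{m,j}/(\delta_{m,j}-\delta_{m,k})$, while the inner bracket reduces to $1-e^{-\alpha_{\rm th}/\delta_{m,j}}$. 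Pulling out the $m=1$ term, which contributes $1-e^{-\alpha_{\rm th}}$, and absorbing it together with the leading $1$ of Eq. \eqref{eqLt_basic_ant} produces the $2-e^{-\alpha_{\rm th}}$ offset that appears in Eq. \eqref{eqo}.

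For the eigenvalue approximations I would treat three regimes separately. For $0<r\ll 1$, I would apply first-order perturbation theory to $\mathbf{R}_{\mathbf{h}}=\mathbf{I}+r\mathbf{T}+O(r^2)$, where $\mathbf{T}$ is the symmetric tridiagonal perturbation; its spectrum is of the standard Chebyshev form and gives $\delta_{m,j}\approx 1-2r\cos(j\pi/(m+1))$. For $0<1-r\ll 1$, the matrix is a perturbation of the rank-one all-ones matrix, so one eigenvalue is close to $m$ and the other $m-1$ eigenvalues are small; a perturbative expansion around the rank-one limit yields the $\sec^2$ formula, and trace preservation ($\sum_j\delta_{m,j}=m$) pins down the dominant eigenvalue as $m-(m^2-1)(1-r)/3$. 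For intermediate $r$, I would invoke the Toeplitz-eigenvalue approximation of \cite{mallik2018exponential} that was already used in Corollary \ref{Ltmodifed_beam_appro}, obtaining $\Phi(r,m,j)$ for the smaller eigenvalues, and then use trace preservation once more to recover $\delta_{m,m}\approx m-\sum_{i=1}^{m-1}\Phi(r,m,i)$.

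The main obstacle will be the piecewise eigenvalue analysis itself: I need to identify and exclude the discrete bad set of $r$ values at which two of the approximate eigenvalues collide, which is exactly what the two explicit exceptional conditions $r=1-6m/[3\sec^2(j\pi/2m)+2(m^2-1)]$ in Eq. \eqref{eqjm-1} and $m-\sum_i\Phi(r,m,i)=\Phi(r,m,j)$ encode. At such $r$ the accidental degeneracy invalidates the simple-spectrum reduction used in the first step, so one would have to fall back on the full repeated-eigenvalue machinery of Theorem \ref{Theorem2}; excluding these measure-zero exceptions is what makes the corollary hold ``for almost all $r$''. The combinatorial simplification yielding the Lagrange basis function, by contrast, is purely algebraic and poses no real difficulty.
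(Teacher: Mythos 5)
Your proposal is correct and follows essentially the same route as the paper: specialize Theorem \ref{Theorem2} to the $m\times m$ exponential principal submatrix, argue the spectrum is simple for all $r$ outside the explicit exceptional sets so that $T_m=m$ and $r_{m,t}=1$, and let the general formula collapse to the Lagrange-basis form $l_{m,j}(0)$ with the $m=1$ term supplying the $2-e^{-\alpha_{\rm th}}$ offset. The only difference is that the paper simply cites \cite[Eq.~(35), (43a-b), (49a-b)]{mallik2018exponential} for all three eigenvalue regimes (and checks distinctness via monotonicity of $\cos$ and $\sec^2$), whereas you propose to re-derive the small-$r$ and near-$1$ approximations from perturbation theory and trace preservation — a harmless, slightly more self-contained variant.
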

\begin{proof}
For the exponential covariance matrix, the approximations of ${{\delta }_{m,j}}, j=1,...,m$ can be written as Eq. \eqref{eqjm-1} and Eq. \eqref{eqmm} according to \cite[Eq. (35), Eq. (43a-b), Eq. (49a-b)]{mallik2018exponential}.
From the monotonicity of $\cos(x)$ in $0<x<\pi$ and that of $\sec^2(x)$ in  $0<x<\frac{\pi}{2}$, we have that for $0<r\ll 1$, $\delta_{m,j}, j=1,...,m$ are different from each other, while for $0<1-r\ll 1$, $\delta_{m,j}, j=1,...,m-1$ are different from each other, and $\delta_{m,m}$ is different from $\delta_{m,j}, j=1,...,m-1$ for $r\ne 1-\frac{6 m}{3 \sec ^2\left(\frac{j \pi}{2 m}\right)+2\left(m^2-1\right)}$. For intermediate $r$ values, $\delta_{m,j}, j=1,...,m-1$ are different from each other due to the monotonicity of $\cos(x)$ in $0<x<\pi$ as well, and $\delta_{m,m}$ is different from $\delta_{m,j}, j=1,...,m-1$ for $m-\sum_{i=1}^{m-1}\Phi(r,m,i) \ne \Phi(r,m,j)$.  
Therefore, we have $r_m= m$, ${T}_m=m$, $\bar{\delta}_{m,t}={\delta }_{m,t}$, ${{r}_{m,t}}=1$ for $t=1,...,{T}_m$. Via substituting these into Eq. \eqref{eqLt_basic_ant} in Theorem \ref{Theorem2}, Eq. \eqref{eqo} can be obtained.
\end{proof}

For almost all values of $0\hspace{-0.05cm}<\hspace{-0.1cm}r\hspace{-0.1cm}<\hspace{-0.05cm}1$, we can conduct faster evaluation and analysis for the average training length of the basic antenna-domain interleaved training scheme with Eq. \eqref{eqo} compared to that with Eq. \eqref{eqLt_basic_ant}. 
For large $r$ values satisfying ${r = 1-\frac{6 m}{3 \sec ^2\left(\frac{j \pi}{2 m}\right)+2\left(m^2-1\right)}},\forall j=1,...,m-1$ or intermediate $r$ values satisfying ${m-\sum_{i=1}^{m-1}\Phi(r,m,i) = \Phi(r,m,j)},\forall j\hspace{-0.2cm}=\hspace{-0.2cm}1,...,m-1$, 
 the training length can still be calculated according to Eq. \eqref{eqLt_basic_ant}.

\subsection{Derivations on the Conditional PDF of the Untrained Channels}

On the one hand, simulation based on Eq. \eqref{eqLt_basic_ant} shows that the channel correlation leads to an increase in the average training length of the basic antenna-domain interleaved training at the general rate requirement. On the other hand, if the channel correlation exists and the system knows it as a priori, we can use the correlation to improve the training efficiency. Specifically, the conditional PDF of the untrained channels can be derived for given values of the already trained channels.
Based on this conditional PDF, the choice of the BS antenna for the next training step can be optimized. In this subsection, we derive the conditional PDF.

\begin{lemma}\label{lemma1}
Given the channel values of the already trained BS antennas, ${h}_m, m\in\mathbb{A}=\left\{{a}_{1},{a}_{2},...,{a}_{|\mathbb{A}|}\right\}$, the conditional PDF of the un-trained channels ${h}_{n}|\mathbf{h}_{\mathbb{A}}, n\in \mathbb{M}-\mathbb{A}$ follows $\mathcal{CN} \left( \bar{\mu}_n,\bar{\sigma}_n^2\right)$ where
\setlength\abovedisplayskip{1pt}
\setlength\belowdisplayskip{1pt}
\begin{equation}\label{miu_all}
	\bar{\mu}_n=\left[ {{[\mathbf{R}_{\mathbf{h}}]}_{n,{{a}_{1}}}},{{[\mathbf{R}_{\mathbf{h}}]}_{n,{{a}_{2}}}},...,{{[\mathbf{R}_{\mathbf{h}}]}_{n,{{a}_{|\mathbb{A}|}}}} \right] {{\mathbf{R}}^{-1}_{{{\mathbf{h}}_{\mathbb{A}}}}}\mathbf{h}_{\mathbb{A}},
\end{equation}
\setlength\abovedisplayskip{1pt}
\setlength\belowdisplayskip{1pt}
\begin{equation}\label{sigma_all}
\begin{aligned}
	\bar{\sigma}_n^2=1-&\left[ {{[\mathbf{R}_{\mathbf{h}}]}_{n,{{a}_{1}}}},{{[\mathbf{R}_{\mathbf{h}}]}_{n,{{a}_{2}}}},...,{{[\mathbf{R}_{\mathbf{h}}]}_{n,{{a}_{|\mathbb{A}|}}}} \right] {{\mathbf{R}}^{-1}_{{{\mathbf{h}}_{\mathbb{A}}}}} \\
&\times{{\left[ {{[\mathbf{R}_{\mathbf{h}}]}_{{{a}_{1}},n}},{{[\mathbf{R}_{\mathbf{h}}]}_{{{a}_{2}},n}},...,{{[\mathbf{R}_{\mathbf{h}}]}_{{{a}_{|\mathbb{A}|}},n}} \right]}^{{\rm T}}}, 
\end{aligned}
\end{equation}
and 
$\left[{{\mathbf{R}}_{{{\mathbf{h}}_{\mathbb{A}}}}}\right]_{i,j}=[\mathbf{R}_{\mathbf{h}}]_{a_i,a_j}, i,j=1,\dots,\left|\mathbb{A}\right|$.
The conditional cumulative distribution function (CDF) of the power of the untrained channel $h_n,n\in \mathbb{M}-\mathbb{A}$ is
\setlength\abovedisplayskip{1pt}
\setlength\belowdisplayskip{1pt}
\begin{equation}\label{CDF}
	\Pr \left( {{\left| {{h}_{n}} \right|}^{2}}\le x |\mathbf{h}_{\mathbb{A}}\right)=1-{{Q}_{1}}\left( \sqrt{2}\frac{\left| {{{\bar{\mu }}}_{n}} \right|}{{{{\bar{\sigma }}}_{n}}},\sqrt{2}\frac{\sqrt{x}}{{{{\bar{\sigma }}}_{n}}} \right).
\end{equation}
\end{lemma}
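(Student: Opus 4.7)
The lemma is essentially a restatement of two classical facts about complex Gaussian vectors, so the route I would follow is direct rather than inventive. The first fact is that, because $\mathbf{h}\sim\mathcal{CN}(\mathbf{0},\mathbf{R}_{\mathbf{h}})$ makes every subvector jointly complex Gaussian, the conditional law of an untrained coefficient $h_n$ given the trained subvector $\mathbf{h}_{\mathbb{A}}$ is itself complex Gaussian, so only its mean and variance need be identified. The second fact is that the squared modulus of a noncentral complex Gaussian is a scaled noncentral chi-squared variable of two degrees of freedom, whose CDF admits a closed form through the first-order Marcum Q-function.

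To derive $(\bar\mu_n,\bar\sigma_n^2)$, I would partition the covariance of $(h_n,\mathbf{h}_{\mathbb{A}})$ into the scalar block $[\mathbf{R}_{\mathbf{h}}]_{n,n}$, the cross-covariance row $\mathbf{r}_{n,\mathbb{A}}\triangleq\bigl([\mathbf{R}_{\mathbf{h}}]_{n,a_1},\dots,[\mathbf{R}_{\mathbf{h}}]_{n,a_{|\mathbb{A}|}}\bigr)$, and the auto-covariance block $\mathbf{R}_{\mathbf{h}_{\mathbb{A}}}$ with entries $[\mathbf{R}_{\mathbf{h}}]_{a_i,a_j}$ as specified in the statement. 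Applying the standard block-conditioning identity for a zero-mean jointly complex Gaussian vector (equivalently, the linear MMSE estimator together with its orthogonality-principle error covariance) immediately yields conditional mean $\mathbf{r}_{n,\mathbb{A}}\mathbf{R}_{\mathbf{h}_{\mathbb{A}}}^{-1}\mathbf{h}_{\mathbb{A}}$ and conditional variance $[\mathbf{R}_{\mathbf{h}}]_{n,n}-\mathbf{r}_{n,\mathbb{A}}\mathbf{R}_{\mathbf{h}_{\mathbb{A}}}^{-1}\mathbf{r}_{n,\mathbb{A}}^{\mathrm H}$, matching \eqref{miu_all}--\eqref{sigma_all} once we use the unit-diagonal normalization $[\mathbf{R}_{\mathbf{h}}]_{n,n}=1$ that is implicit in both correlation models \eqref{eq_onering} and \eqref{eq:} (and that supplies the leading ``$1$'' in \eqref{sigma_all}).

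For the CDF in \eqref{CDF}, I would write $h_n\mid\mathbf{h}_{\mathbb{A}}=\bar\mu_n+(\bar\sigma_n/\sqrt 2)(X+jY)$ with independent standard normals $X,Y$, so that $2|h_n|^2/\bar\sigma_n^2$ is conditionally distributed as $\chi^2\!\bigl(2,\,2|\bar\mu_n|^2/\bar\sigma_n^2\bigr)$. Invoking the known identity $\Pr\!\bigl(\chi^2(2,\lambda)\le y\bigr)=1-Q_1(\sqrt\lambda,\sqrt y)$ and substituting $\lambda=2|\bar\mu_n|^2/\bar\sigma_n^2$, $y=2x/\bar\sigma_n^2$ then reproduces the stated formula.

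I do not anticipate a genuine technical obstacle; the only care points are keeping the Hermitian-transpose conventions consistent in the conditional variance so that $\mathbf{r}_{n,\mathbb{A}}^{\mathrm H}$ correctly reproduces the column vector with entries $[\mathbf{R}_{\mathbf{h}}]_{a_i,n}$ displayed in \eqref{sigma_all}, and verifying the unit-variance normalization $[\mathbf{R}_{\mathbf{h}}]_{n,n}=1$ that sits in front of the subtracted quadratic form.
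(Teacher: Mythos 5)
Your proposal is correct and follows essentially the same route as the paper: the paper likewise invokes the standard conditional-Gaussian (block-conditioning) formula for circularly symmetric complex Gaussian vectors to obtain $\bar\mu_n$ and $\bar\sigma_n^2$, and then identifies $\bigl|\tfrac{\sqrt 2}{\bar\sigma_n}h_n\bigr|^2\mid\mathbf{h}_{\mathbb{A}}\sim\chi^2\bigl(2,\,2|\bar\mu_n|^2/\bar\sigma_n^2\bigr)$ to express the CDF via the first-order Marcum Q-function. Your care points about the unit-diagonal normalization and the Hermitian structure of $\mathbf{R}_{\mathbf{h}}$ (which makes the displayed column vector equal to $\mathbf{r}_{n,\mathbb{A}}^{\rm H}$) are exactly what the paper's version implicitly relies on.
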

\begin{proof}
${{\mathbf{R}}_{{{\mathbf{h}}_{\mathbb{A}}}}}$ is the covariance matrix of the vector of the trained channels $\mathbf{h}_{\mathbb{A}}$, which is a submatrix of the overall channel covariance matrix $\mathbf{R}_{\mathbf{h}}$.
Recall that $\mathbf{h}$ is a circular-symmetric complex Gaussian vector. Then from  \cite[Eq. (32)]{539051}, the conditional mean in Eq. \eqref{miu_all} and the conditional variance in Eq. \eqref{sigma_all} can be obtained. The CDF in Eq. \eqref{CDF} can be obtained from properties of noncentral chi-squared distribution, i.e., $\frac{\sqrt{2}}{{{{\bar{\sigma }}}_{n}}}{{h}_{n}}|\mathbf{h}_\mathbb{A}\sim\mathcal{C}\mathcal{N}\left( \sqrt{2}\frac{{{{\bar{\mu }}}_{n}}}{{{{\bar{\sigma }}}_{n}}},2 \right)$ due to  ${{h}_{n}}|\mathbf{h}_\mathbb{A}\sim \mathcal{C}\mathcal{N}\left( {{{\bar{\mu }}}_{n}},\bar{\sigma }_{n}^{2} \right)$. Then, we have the conditional PDF ${{\left| \frac{\sqrt{2}}{{{{\bar{\sigma }}}_{n}}}{{h}_{n}} \right|}^{2}}|\mathbf{h}_\mathbb{A}\sim{{\chi }^{2}}\left( 2,2\frac{{{\left| {{{\bar{\mu }}}_{n}} \right|}^{2}}}{{{{\bar{\sigma }}}^{2}}_{n}} \right)$ and the conditional CDF
\setlength\abovedisplayskip{1pt}
\setlength\belowdisplayskip{1pt}
\begin{equation}\label{pro}
	{{F}_{{{\left| \frac{\sqrt{2}}{{{{\bar{\sigma }}}_{n}}}{{h}_{n}} \right|}^{2}}|\mathbf{h}_\mathbb{A}}}\left( x \right)=1-{{Q}_{1}}\left( \sqrt{2}\frac{\left| {{{\bar{\mu }}}_{n}} \right|}{{{{\bar{\sigma }}}_{n}}},\sqrt{x} \right).
\end{equation}
Therefore, Eq. \eqref{CDF} can be obtained.
\end{proof}

Recall that $\mathbb{A} = \{a_1,...,a_{|\mathbb{A}|}\} $ denotes the set of indices of the already trained BS antennas and for simplicity of presentation, we assume that $a_1<a_2<\cdots<a_{|\mathbb{A}|}$.
If the index of an un-trained BS antenna $n$ satisfies $a_1<n<a_{|\mathbb{A}|}$, we denote the index of its nearest BS antennas in the trained set $\mathbb{A}$ with a smaller index as $a_{x^\star}$, that is, ${x^\star} = \arg \min_{a_i\in \mathbb{A}, a_i<n} n - a_i$. Thus $a_{x^\star+1}$ is the index of the trained BS antenna which is the nearest to Antenna $n$ with a larger index than $n$. Define $x_1 = n - a_{x^\star}$ and $x_2 = a_{{x^\star}+1} - n$.
\begin{corollary}\label{corollary3}
 Under the exponential correlation model, we have
\setlength\abovedisplayskip{1pt}
\setlength\belowdisplayskip{1pt}
\begin{equation}\label{miu_n}
\bar{\mu}_n=\begin{cases} {{\left(\rho^*\right)}^{{{a}_{1}}-n}}{{{h}}_{{{a}_{1}}}}, \hspace{3.8em}\text{ if } n < a_1
\\ \frac{\left[ {{\rho }^{x_1}}\left( 1-{{r }^{2x_2}} \right){{{{h}}}_{{{a}_{x^\star}}}}+{{\left(\rho^*\right) }^{x_2}}\left( 1-{{r}^{2x_1}} \right){{{{h}}}_{{{a}_{x^\star+1}}}} \right]}{1-{{r }^{2\left( x_1+x_2\right)}}}, \\
\hspace{9.4em}\text{ if } a_1 < n < a_{|\mathbb{A}|}
\\ {{\rho }^{{n-{a}_{|\mathbb{A}|}}}}{{{h}}_{{{a}_{|\mathbb{A}|}}}}, \hspace{4em}\text{ if } n > a_{|\mathbb{A}|} \\ 
\end{cases},
\end{equation}
and
\setlength\abovedisplayskip{1pt}
\setlength\belowdisplayskip{1pt}
\begin{equation}\label{sigma_n}
	\bar{\sigma}^2_n = \begin{cases} 1-{{r}^{2\left({{a}_{1}}-n \right)}}, &\text{ if } n < a_1
\\ \frac{\left(1-r^{2x_1}\right)\left(1-r^{2x_2}\right)}{1-{{r}^{ 2\left(x_1+x_2\right)}}}, &\text{ if } a_1 < n < a_{|\mathbb{A}|}
\\ 1-{{r}^{2 \left(n-{{a}_{|\mathbb{A}|}}\right) }}, &\text{ if } n > a_{|\mathbb{A}|} \\ 
\end{cases}.
\end{equation}
The conditional distribution of the channel of BS antenna $n\in \mathbb{M} - \mathbb{A}$ is only related to channel values of the two nearest BS antennas on both sides, $a_{{x^\star}}$ and $a_{{x^\star}+1}$.
\end{corollary}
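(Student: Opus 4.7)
The plan is to apply Lemma \ref{lemma1} under the exponential covariance model and exploit the fact that $\{h_i\}_{i=1}^{M}$ with $[\mathbf{R}_{\mathbf{h}}]_{m,n}=\rho^{m-n}$ for $m\ge n$ is a first-order Gauss--Markov chain. One verifies this immediately through the AR(1) representation $h_{i+1}=\rho h_i + w_{i+1}$ with $w_{i+1}\sim\mathcal{CN}(0,1-r^2)$ independent of $h_1,\dots,h_i$, which reproduces the stated covariance entries and yields $\mathrm{Var}(h_i)=1$ for all $i$. The Markov property then gives the screening claim: conditional on the two trained channels flanking $n$ (or on the single trained channel on one side when $n$ lies outside $[a_1,a_{|\mathbb{A}|}]$), $h_n$ is independent of every other entry of $\mathbf{h}_{\mathbb{A}}$. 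This alone proves the final sentence of the corollary and collapses the $|\mathbb{A}|\times|\mathbb{A}|$ inversion demanded by Lemma \ref{lemma1} to at most a $2\times 2$ problem.

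For $n<a_1$ and $n>a_{|\mathbb{A}|}$, the reduction leaves a single conditioning variable, and Lemma \ref{lemma1} gives $\bar{\mu}_n=[\mathbf{R}_{\mathbf{h}}]_{n,a_\star}\, h_{a_\star}$ and $\bar{\sigma}_n^2=1-|[\mathbf{R}_{\mathbf{h}}]_{n,a_\star}|^2$. Substituting $[\mathbf{R}_{\mathbf{h}}]_{n,a_1}=(\rho^*)^{a_1-n}$ and $[\mathbf{R}_{\mathbf{h}}]_{n,a_{|\mathbb{A}|}}=\rho^{n-a_{|\mathbb{A}|}}$ (via the Hermiticity of $\mathbf{R}_{\mathbf{h}}$) produces the first and third branches of Eq. \eqref{miu_n} and Eq. \eqref{sigma_n}. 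For the interior case, the reduced $2\times 2$ covariance and its inverse are
\begin{equation*}
\mathbf{R}_2=\begin{pmatrix}1 & (\rho^*)^{x_1+x_2}\\ \rho^{x_1+x_2} & 1\end{pmatrix},\qquad \mathbf{R}_2^{-1}=\frac{1}{1-r^{2(x_1+x_2)}}\begin{pmatrix}1 & -(\rho^*)^{x_1+x_2}\\ -\rho^{x_1+x_2} & 1\end{pmatrix},
\end{equation*}
while the cross-covariance row is $[\rho^{x_1},(\rho^*)^{x_2}]$. Multiplying and repeatedly using $\rho\rho^*=r^2$ collapses the two coefficients in $[\rho^{x_1},(\rho^*)^{x_2}]\mathbf{R}_2^{-1}$ to $\rho^{x_1}(1-r^{2x_2})/(1-r^{2(x_1+x_2)})$ and $(\rho^*)^{x_2}(1-r^{2x_1})/(1-r^{2(x_1+x_2)})$, matching the middle branch of Eq. \eqref{miu_n}. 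The variance follows from the quadratic form $[\rho^{x_1},(\rho^*)^{x_2}]\mathbf{R}_2^{-1}[\rho^{x_1},(\rho^*)^{x_2}]^{\mathrm{H}}$, whose scalar value simplifies to $(r^{2x_1}+r^{2x_2}-2r^{2(x_1+x_2)})/(1-r^{2(x_1+x_2)})$; subtracting this from $1$ and invoking the factorization $(1-r^{2x_1})(1-r^{2x_2})=1-r^{2x_1}-r^{2x_2}+r^{2(x_1+x_2)}$ recovers exactly the middle branch of Eq. \eqref{sigma_n}.

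The only nontrivial step is the Markov reduction. The cleanest route is via the AR(1) representation above, after which the standard nearest-neighbor independence property of Gauss--Markov chains applies. An equivalent path is to use the fact that the inverse of an exponential Toeplitz covariance matrix is tridiagonal, which forces the row vector $[\mathbf{R}_{\mathbf{h}}]_{n,\mathbb{A}}\mathbf{R}_{\mathbf{h}_{\mathbb{A}}}^{-1}$ appearing in Lemma \ref{lemma1} to have nonzero entries only at the positions corresponding to $a_{x^\star}$ and $a_{x^\star+1}$. Either derivation is self-contained, and the subsequent $2\times 2$ algebra outlined above is routine.
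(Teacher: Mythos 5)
Your proof is correct, and it takes a genuinely different route from the paper's. The paper proves the screening property by brute linear algebra: writing $m_i=\det(\mathbf{R}_i)/\det(\mathbf{R})$ via Cramer's rule, exhibiting an explicit linear dependence among the rows of $\mathbf{R}_i$ to show $\det(\mathbf{R}_i)=0$ for $i\notin\{x^\star,x^\star+1\}$, and then evaluating the two surviving determinants through a recurrence obtained by splitting entries of the form $(\rho^*)^{a_{j+1}-a_j}$. Your argument instead observes that the exponential model is exactly the covariance of a complex AR(1) process, so $(h_1,\dots,h_M)$ is a Gauss--Markov chain and the subsampled vector indexed by $\{n\}\cup\mathbb{A}$ inherits the Markov property; the nearest-neighbor screening then follows from the global Markov property of a chain, and only a $2\times 2$ inversion remains. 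Your $2\times 2$ algebra for $\bar{\mu}_n$ and $\bar{\sigma}_n^2$ checks out in all three branches (I verified the cross-covariance row $[\rho^{x_1},(\rho^*)^{x_2}]$, the coefficients $\rho^{x_1}(1-r^{2x_2})$ and $(\rho^*)^{x_2}(1-r^{2x_1})$ over $1-r^{2(x_1+x_2)}$, and the factorization yielding $(1-r^{2x_1})(1-r^{2x_2})/(1-r^{2(x_1+x_2)})$), and unlike the paper's appendix you treat the two boundary cases explicitly. What your approach buys is conceptual economy and immediate generality to any Gauss--Markov covariance; what the paper's buys is a fully self-contained computation that also produces the coefficients $c_{x^\star},c_{x^\star+1}$ of the row dependence, which coincide with $m_{x^\star},m_{x^\star+1}$. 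One small caution on your ``equivalent path'': tridiagonality of $\mathbf{R}_{\mathbf{h}_{\mathbb{A}}}^{-1}$ alone does not force $[\mathbf{R}_{\mathbf{h}}]_{n,\mathbb{A}}\mathbf{R}_{\mathbf{h}_{\mathbb{A}}}^{-1}$ to have only two nonzero entries (a full row times a tridiagonal matrix is generally full); you need the tridiagonal precision matrix of the \emph{augmented} index set $\{n\}\cup\mathbb{A}$, from which the conditional coefficients of $h_n$ are read off its row. Your primary AR(1)/Markov route does not suffer from this issue.
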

\begin{proof}
See Appendix A.
\end{proof}

The results in Corollary \ref{corollary3} can help significantly reduce the computational complexity for the conditional CDF of the untrained channel power for scenarios with an exponential correlation model.

\subsection{Modified Antenna-Domain Interleaved Training Scheme}\label{modified antenna domain scheme}

Based on the conditional PDF of the untrained channels in Lemma \ref{lemma1}, we propose a modified antenna-domain interleaved training scheme where at the beginning of each training step, the BS antenna whose channel is to be trained is optimally selected. The basic idea is to use the channel values of the already trained antennas to calculate the probability of meeting the transmission requirement if each untrained BS antenna is selected. Then the antenna with the highest probability is chosen.

For the selection of the first antenna $n_0$ to be trained, we cannot use the same approach since no channel values have been obtained. Instead, we use the conditional variance in Eq. \eqref{sigma_all}.
Under the assumption that all antennas have the same average power, the first antenna to be trained can be the one resulting in the minimum overall conditional variance of other antennas, e.g., $n_0 = \arg \min_{m}\sum_{n=1,n\neq m}^{M}\bar{\sigma}^2_n$.
It can be seen from Eq. \eqref{sigma_n} that $n_0 = \left\lfloor \frac{M+1}{2} \right\rfloor$ under the exponential correlation model.

Recall that ${\mathbb{A}_m}$ and ${\mathbf{h}}_{\mathbb{A}_m}$ are the set of indices of the $m$ BS antennas whose channels have been trained and the obtained channel vector of these BS antennas after $m$ training steps.
At the beginning of the $(m+1)$-th training step, based on the already obtained channel vector $\mathbf{h}_{\mathbb{A}_m}$, the BS antenna to be trained for the $(m+1)$-th step is selected by the BS as follows. 
For each untrained BS antenna $n\in \mathbb{M} - \mathbb{A}_{m}$, the BS calculates the probability that the obtainment of the channel of BS antenna $n$ in the $(m+1)$-th training step can meet the transmission requirements as follows:
\setlength\abovedisplayskip{1pt}
\setlength\belowdisplayskip{1pt}
\begin{equation}\label{eqzc_47}
\begin{aligned}
\Pr &\left( {{\left| {{h}_{n}} \right|}^{2}}\ge {{\alpha }_{\rm th}}-{{\left\| {{\mathbf{h}}_{\mathbb{A}_{m}}} \right\|}^{2}} \right)\\
&={{Q}_{1}}\left( \sqrt{2}\frac{\left| {{{\bar{\mu }}}_{n}} \right|}{{{{\bar{\sigma }}}_{n}}},\sqrt{2}\frac{\sqrt{\alpha_{\rm th} -{{\left\| {{\mathbf{h}}_{\mathbb{A}_{m}}} \right\|}^{2}}}}{{{{\bar{\sigma }}}_{n}}} \right).
\end{aligned}
\end{equation}
Then, the BS selects the one with the highest probability among all untrained antennas, i.e., the index of the BS antenna for the $m+1$ training step is 
\setlength\abovedisplayskip{1pt}
\setlength\belowdisplayskip{1pt}
\begin{equation}\label{n_star}
n^{\star} = \arg \max_{n\in \mathbb{M}-\mathbb{A}_{m}}{\Pr}\left( {{\left| {{h}_{n}} \right|}^{2}}\ge {{\alpha }_{\rm th}}-{{\left\| {{\mathbf{h}}_{\mathbb{A}_{m}}} \right\|}^{2}} \right).
\end{equation}
The proposed modified antenna-domain interleaved training scheme is presented in Algorithm \ref{algorithm_2}, where the major difference to the basic scheme is in Step 4 on the antenna selection.
\begin{algorithm}[htp] 
\linespread{1} \selectfont
	\caption{Modified Antenna-Domain Interleaved Training Scheme} 
	\begin{algorithmic}[1]
		\item \textbf{Initialization:} $n^{\star} = n_0$; $\mathbb{A}_{1}=\left\{n^{\star} \right\} $;  $m=1$; BS sends a pilot for UE to acquire ${h}_{n^{\star}}$;
		\item \textbf{While} ${{\left\| {{\mathbf{h}}_{\mathbb{A}_{m}}} \right\|}^{2}}<\alpha_{\rm th} \And m<M$ \textbf{do} 
		\item \hspace{0.5cm}The UE sends one bit $0$ and $h_{n^{\star}}$ to the BS;
		\item \hspace{0.5cm}The BS calculates the probability value for each $n \in \mathbb{M}-\mathbb{A}_{m}$ according to Eq. \eqref{eqzc_47} and then decides the index of next training antenna $ n^{\star}$ according to Eq. \eqref{n_star}; 
		\item \hspace{0.5cm}The BS sends a pilot for the UE to acquire ${h}_{n^{\star}}$; 
		\item \hspace{0.5cm}$m=m+1$; $\mathbb{A}_{m}=\left\{ \mathbb{A}_{m-1},n^{\star}\right\} $; 
		\item \textbf{end}
		\item \textbf{if} $\left\Vert \mathbf{h}_{\mathbb{A}_{m}}\right\Vert  ^{2}\geq\alpha_{\rm th}$ 
		\item \hspace{0.5cm}The UE feeds back one bit $1$ $\&$ $h_{n^{\star}}$ to the BS;
		\item \hspace{0.5cm}The BS conducts downlink precoding according to Eq. \eqref{eq7}; 
		\item \textbf{else}
		\item \hspace{0.5cm}The UE feeds back one bit $0$ to the BS; 
		\item \textbf{end} 
	\end{algorithmic} \label{algorithm_2}  
\end{algorithm}

\subsubsection{Complexity Analysis}

The complexity of Algorithm \ref{algorithm_2} is mainly generated by Step $4$ in the loop from Step $2$ to Step $7$. Denote the training length for a random channel realization as $N$ satisfying $1\le  N\le M$. For the $(m+1)$-th training step, where $m< N$, the number of operations needed for calculating ${{\mathbf{R}}^{-1}_{{{\mathbf{h}}_{\mathbb{A}}}}}$ in Eq. \eqref{miu_all} and Eq. \eqref{sigma_all} scales as $m^3$, and the number of operations needed for the remaining matrix multiplications in Eq. \eqref{miu_all} and Eq. \eqref{sigma_all} scales as $(M-m)m^2$. Therefore, the complexity for calculating the conditional mean and variance for a training process with $N$ steps is $O(N^4+MN^3)$ and an upper bound on the complexity of Algorithm \ref{algorithm_2} is $O(M^4)$ since $N\le M$. For channels with the exponential covariance matrix, since the conditional mean and variance can be calculated according to Eq. \eqref{miu_n} and Eq. \eqref{sigma_n} without matrix inversion and matrix multiplication, an upper bound of the algorithm complexity is $O\left(M^2\right)$.

\subsubsection{Average Training Length}

Similar to the derivation of Eq. \eqref{eq11}, the average training length of the modified antenna-domain interleaved training scheme can be expressed as
\setlength\abovedisplayskip{1pt}
\setlength\belowdisplayskip{1pt}
\begin{equation}\label{eq49}
	L_t=1+\sum\limits_{m=1}^{M-1}{\Pr \left( {{\left\|{\mathbf{h}}_{\mathbb{A}_m}\right\|}^{2}}<\alpha_{\rm th}  \right)}.
\end{equation}
To derive the analytical or even closed-form expression of $L_t$ in Eq. \eqref{eq49},
the key is to calculate ${\Pr}({{\left\| {{\mathbf{h}}_{\mathbb{A}_m}} \right\|}^{2}}<\alpha_{\rm th})$.
From Step $4$ in Algorithm \ref{algorithm_2} and Eq. \eqref{eqzc_47} we know that one should first calculate the conditional mean $\bar{\mu}_n$ and conditional variance $\bar{\sigma}^2_n$ for $n\in \mathbb{M} - {\mathbb{A}_m}$ based on both ${\mathbb{A}_{m}}$ and ${\mathbf{h}}_{\mathbb{A}_{m}}$ to decide the antenna index $n^{\star}$ for the $(m+1)$-th training step. This makes the derivation of the PDF of ${\left\|{\mathbf{h}}_{\mathbb{A}_{m+1}}\right\|}^{2}$ challenging because ${\mathbb{A}_{m}}$ changes for each channel realization. In addition, the ${{Q}_{1}}\left(a,b \right)$ function involves a two-fold infinite series summation, resulting in an implicit relationship between $n^{\star}$ and ${\mathbb{A}_{m}}$, ${\mathbf{h}}_{\mathbb{A}_{m}}$. These all make the derivation of an analytical expression of $L_t$ in Eq. \eqref{eq49} intractable.

To circumvent the above difficulties, we introduce the deep neural network (DNN) $L_t=f\left(M,\mathbf{R}_{\mathbf{h}},\alpha_{\rm th};\boldsymbol{\Theta}\right)$ with $\boldsymbol{\Theta}$ being the network parameter matrix to model the function of $L_t$ with respective to system parameters, e.g., the BS antenna number $M$, the channel covariance matrix $\mathbf{R}_{\mathbf{h}}$ and the normalized SNR threshold $\alpha_{\rm th}=\left({2^{R_{\rm th}}-1}\right)/{P}$. 
For channels with the exponential covariance matrix, one can use the correlation coefficient $\rho$ to replace the input parameter $\mathbf{R}_{\mathbf{h}}$.
{The latter simulation results show that the function $f$ can be well-fitted by a DNN model with fully connected hidden layers.}
This deep learning-based approximation of the average training length can provide a faster performance evaluation of the modified antenna-domain interleaved training scheme compared to the Monte Carlo simulation.

\section{Simulation and Discussion}
\label{Simulation}

In this section, simulation results are shown for the proposed modified beam-domain and antenna-domain interleaved training schemes and their comparison with existing baseline schemes. The exponential correlation model in Eq. \eqref{eq:} is considered in Sections \ref{Beam-Domain Interleaved Training in Correlated Channels} to \ref{Comparison of Basic and Modified Antenna-Domain Interleaved Training}. The one-ring correlation model in Eq. \eqref{eq_onering} is considered in Section \ref{Basic and Modified Antenna/Beam-Domain Interleaved Training in One-ring Correlated Channels}.

\subsection{Beam-Domain Interleaved Training Under the Exponential Correlation Model}\label{Beam-Domain Interleaved Training in Correlated Channels}
Fig. \ref{fig:1} shows the average training lengths of the basic and modified beam-domain interleaved training schemes under the exponential correlation model, including the simulation values, the theoretical values in Eq. \eqref{eqtl} of Theorem \ref{Theorem1} and Eq. \eqref{eqbeam} of Corollary \ref{Ltmodifed_beamdomain}, and the approximate values in Eq. \eqref{eq29} of Corollary \ref{Ltmodifed_beam_appro}. We can see from Fig. \ref{fig:1_1} that the curves of simulation values and theoretical values match well for different scenarios. The curves of approximate values for $M=32, 64$ and $\rho=0.8$ in Fig. \ref{fig:1_2} have some gap with the simulation curves, while the gap for the case of $M=64$ is relatively small. This is because that Eq. \eqref{eq30} is a large-$M$ eigenvalue approximation. These observations verify our derivations in Section \ref{Average Training Length Analysis}.
\begin{figure}[htp]
	\centering
	\subfloat[Basic interleaved training scheme]{
		\includegraphics[scale=0.6]{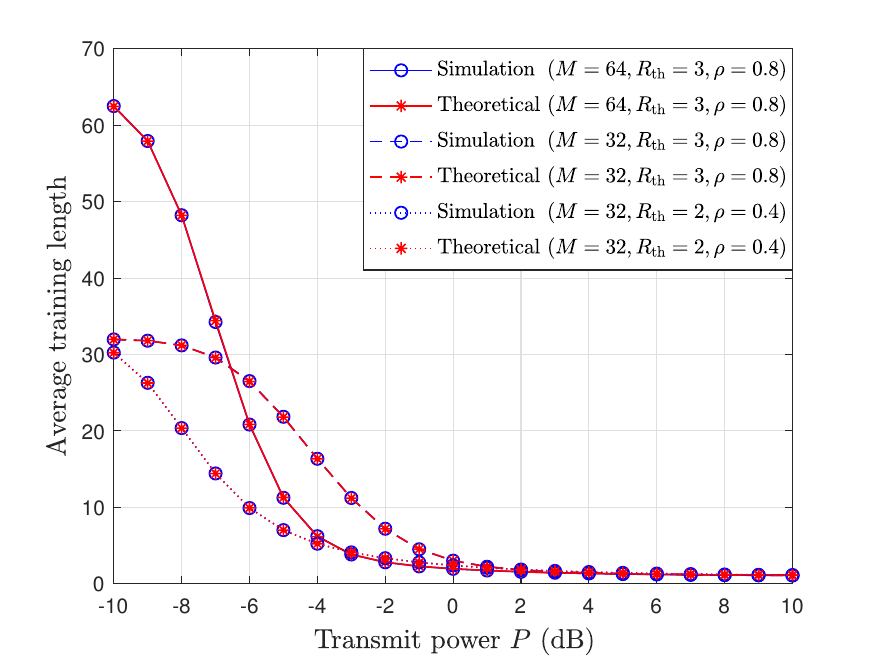}
		\label{fig:1_1}}
		
	\subfloat[Modified interleaved training scheme]{
		\includegraphics[scale=0.6]{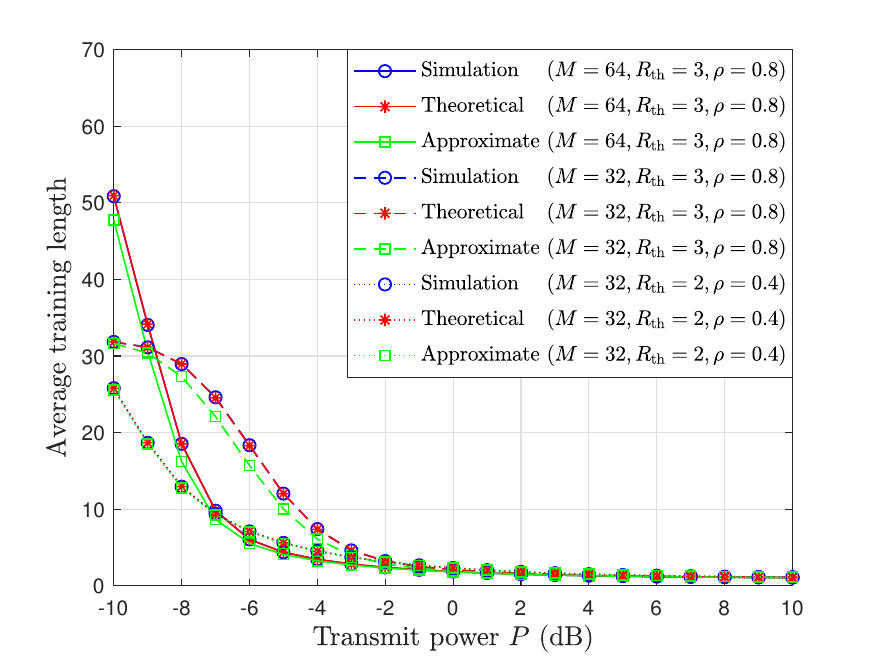}
		\label{fig:1_2}}
	\caption{Average training length of beam-domain interleaved training scheme.}
	\label{fig:1}	
\end{figure}

\subsection{Comparison of Basic and Modified Beam-Domain Interleaved Training Under the Exponential Correlation Model}\label{Performance Comparison of Basic and Modified Beam-Domain Interleaved Training}
\begin{figure}[htp]
	\centering
	\subfloat[Correlation coefficient $\rho=0.8$]{
		\includegraphics[scale=0.6]{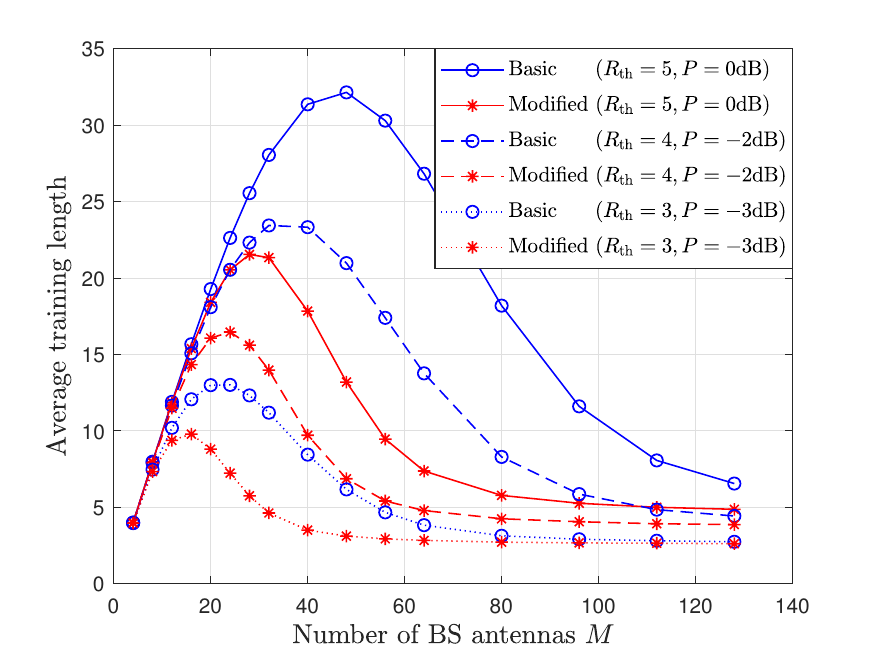}
		\label{fig:2_1}}

	\subfloat[Correlation coefficient $\rho=0.4$]{
		\includegraphics[scale=0.6]{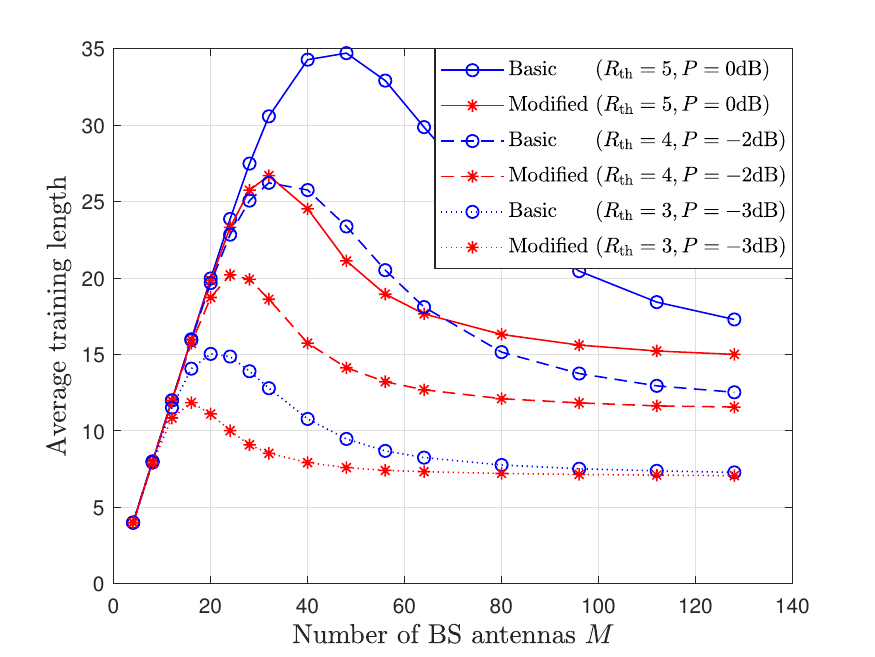}
		\label{fig:2_2}}
	\caption{Average training length of beam-domain interleaved training with different antennas number $M$.}
	\label{fig:2}
\end{figure}
Fig. \ref{fig:2} shows the average training lengths of the basic and modified beam-domain interleaved training with different antenna number $M$ for $\rho=0.8$ and $0.4$. It can be seen that the modified scheme outperforms the basic scheme in the average training length for three combinations of $R_{\rm th}$ and $P$, i.e., 1) $R_{\rm th} = 5$ bit/s/Hz, $P = 0$ dB and $\alpha_{\rm th} = 31$; 2) $R_{\rm th} = 4$ bit/s/Hz, $P = -2$ dB and $\alpha_{\rm th} = 23.77$; 3) $R_{\rm th} = 3$ bit/s/Hz,  $P = -3$ dB and $\alpha_{\rm th} = 13.97$. And the advantage becomes larger as $\alpha_{\rm th}$ increases from $\alpha_{\rm th} = 13.97$ to $\alpha_{\rm th} = 31$, showing that the modified beam-domain scheme exhibits greater performance advantages under more stringent transmission requirements. 
{In addition, the average training lengths of both schemes first increase and then decrease as $M$ increases. For large enough $M$, the training length levels off as $M$ increases. These results are consistent with the description in Corollary \ref{be_Lt_M} and Remark \ref{be_Lt_M_re}.}

\begin{figure}[htp]
\centering
\includegraphics[scale=0.6]{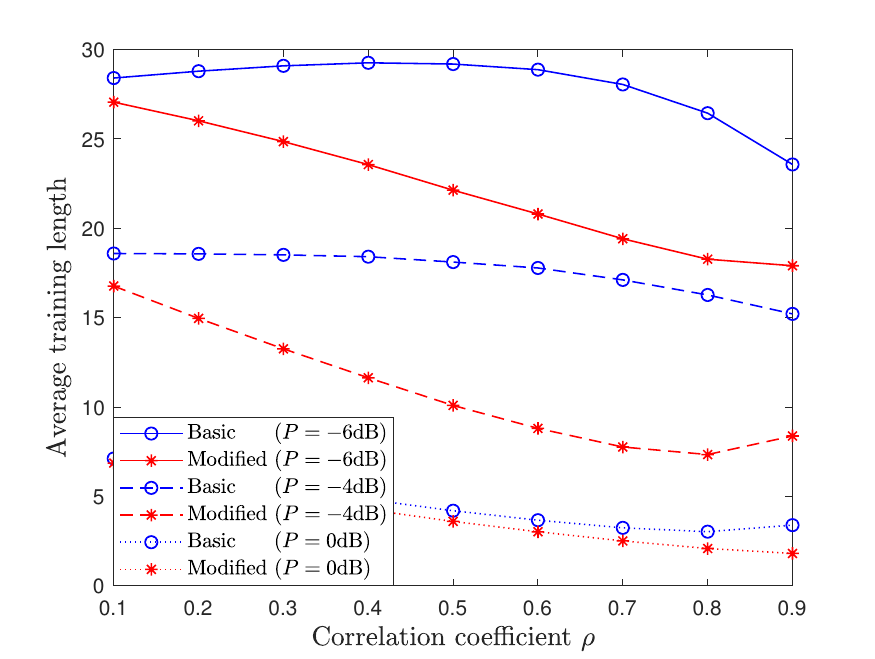}
\caption{Average training length of beam-domain interleaved training with different channel correlation levels.}
\label{fig:3}
\end{figure}
Fig. \ref{fig:3} shows the average training lengths of the basic and modified beam-domain interleaved training schemes with different channel correlation levels for $M=32$, $R_{\rm th}=3$ bit/s/Hz and $P = -6, -4, 0$ dB. Note that $\alpha_{\rm th} = 27.87, 17.58, 7$ respectively. The basic scheme uses the DFT codebook. The shorter average training length of the modified scheme can also be observed in the figure, especially for relatively low transmit power, e.g., $P=-4,-6$ dB. 
And with increasing $\rho$, the performance advantage of the modified scheme over the basic scheme enlarges for $P=0$ dB, while for $P=-4,-6$ dB, it first increases for $\rho\le 0.7$ and then decreases for $\rho > 0.7$.

\subsection{Antenna-Domain Interleaved Training Under the Exponential Correlation Model}\label{Antenna-Domain Interleaved Training in Correlated Channels}

\begin{figure}[htp]
	\centering
	\subfloat[Basic interleaved training scheme]{
		\includegraphics[scale=0.6]{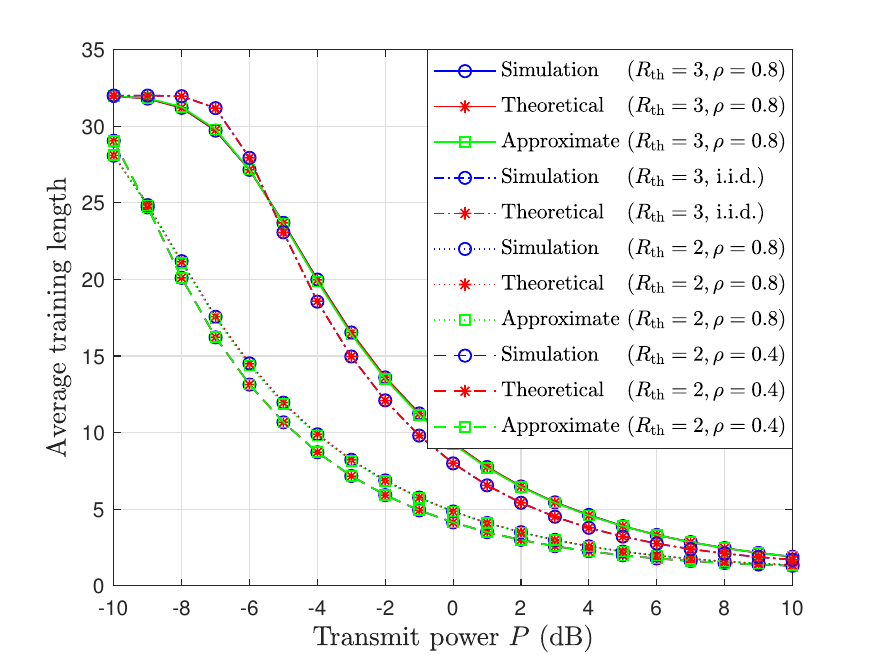}
		\label{fig:4_1}}

	\subfloat[Modified interleaved training scheme]{
		\includegraphics[scale=0.6]{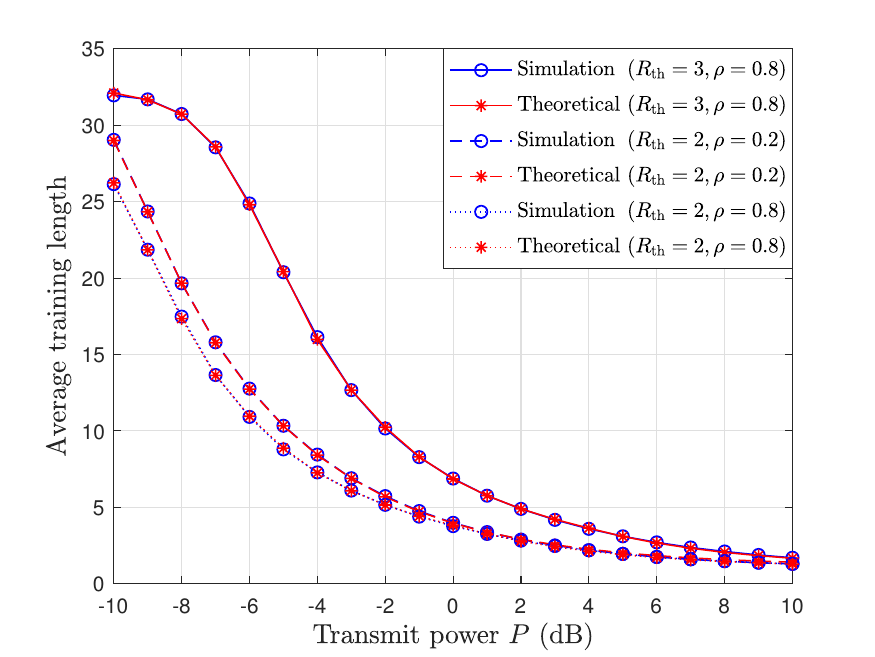}
		\label{fig:4_2}}
	\caption{Average training length of antenna-domain interleaved training scheme.}
	\label{fig:4}
\end{figure}
Fig. \ref{fig:4} shows the average training lengths of the basic and modified antenna-domain interleaved training schemes under the exponential correlation model for $M=32$, $R_{\rm th}=2, 3$ bit/s/Hz and $\rho = 0$, i.e., i.i.d. channels, $0.4$, and $0.8$.
Fig. \ref{fig:4_1} shows the simulation values, the theoretical values provided by Corollary  \ref{Theorem2}, i.e., Eq. \eqref{eqtl}, and the approximate values of the basic scheme in Eq. \eqref{eqo} of Corollary \ref{Lt_basic_antenna_appro}.
We can see from the figure that all three curves match well for different scenarios. This verifies the results in Section \ref {Impact of Channel Correlation}. 
For $R_{\rm th}=2$ bit/s/Hz, the average training length of the basic scheme increases when $\rho$ increases from $0.4$ to $0.8$ for $P>-9$ dB, i.e., $\alpha_{\rm th}<23.83$. For smaller $P$ and larger $\alpha_{\rm th}$, the increase of $\rho$, on the contrary, leads to a decrease of average training length for the basic scheme.

In Fig. \ref{fig:4_2}, we use a fully connected DNN containing four hidden layers (each with 4, 8, 16 and 32 Relu neurons) to provide an approximate average training length of the modified antenna-domain interleaved training scheme. 
The function for training the DNN model is the \textit{trainlm} based the Levenberg-Marquardt algorithm, which has the fastest convergence speed for medium-sized DNN. 
The loss is the mean-square error (MSE). The dataset has $3173$ samples of $<\rho,\alpha_{\rm th}, L_t >$. The ratio of the training set, validation set, and test set is 0.7:0.15:0.15. As shown in the figure, the designed DNN fits the function of training overhead well and its prediction of $L_t$ for these unseen combinations of $\rho$, $\alpha_{\rm th}$ and $P$ matches with the simulation results.

\subsection{Comparison of Basic and Modified Antenna-Domain Interleaved Training Under the Exponential Correlation Model}\label{Comparison of Basic and Modified Antenna-Domain Interleaved Training}

\begin{figure}[htp]
	\centering
	\subfloat[Correlation coefficient $\rho=0.8$]{
		\includegraphics[scale=0.6]{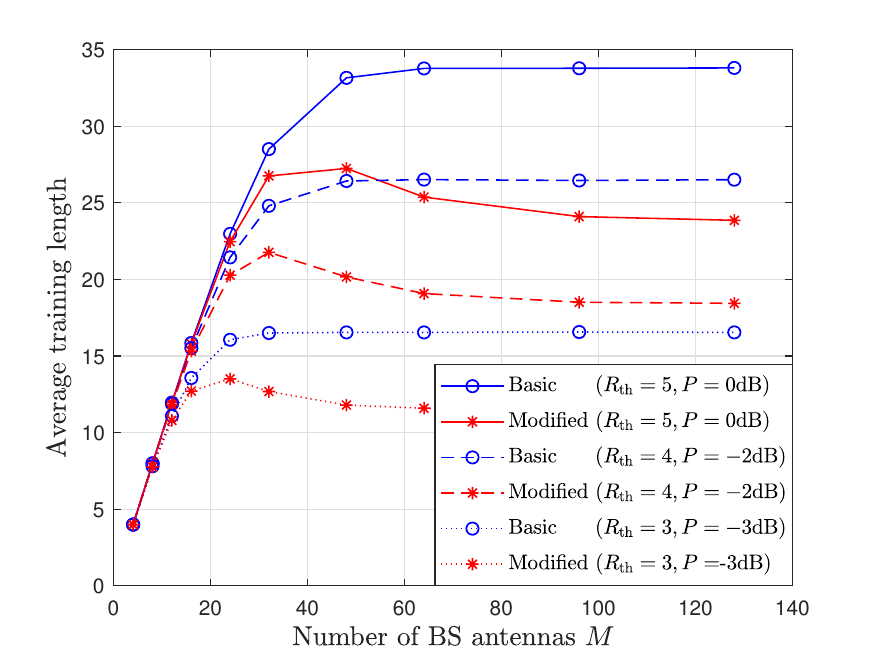}
		\label{fig:5_1}}

	\subfloat[Correlation coefficient $\rho=0.4$]{
		\includegraphics[scale=0.6]{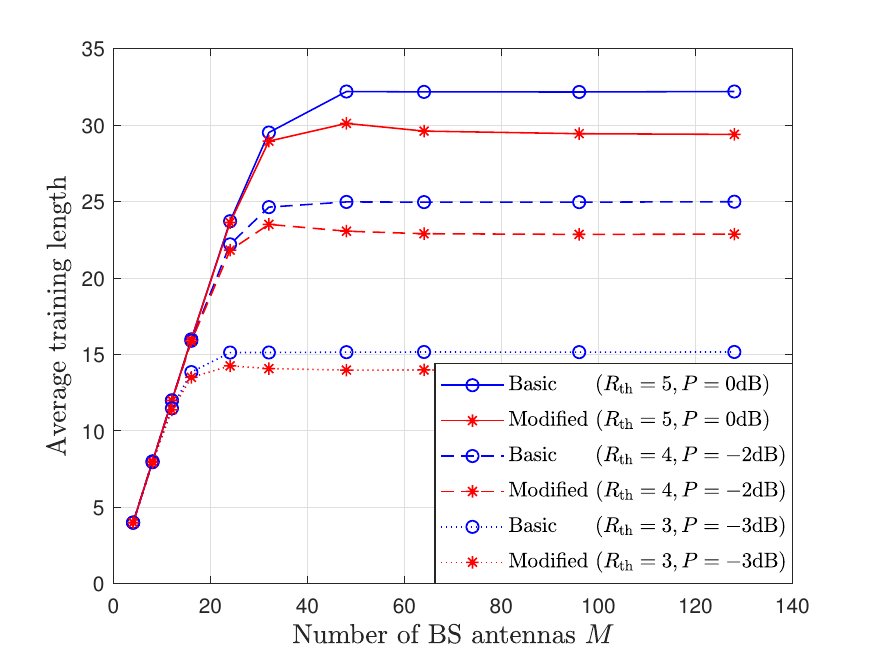}
		\label{fig:5_2}}
	\caption{Average training length of antenna-domain interleaved training with different BS antennas number $M$.}
	\label{fig:5}
\end{figure}
Fig. \ref{fig:5} shows the average training lengths of the basic and modified antenna-domain interleaved training schemes with different BS antenna number $M$ for $\rho=0.8$ and $0.4$.
It can be seen that the modified scheme outperforms the basic scheme in the average training length for three combinations of $R_{\rm th}$ and $P$, i.e., 1) $R_{\rm th} = 5$ bit/s/Hz, $P = 0$ dB and $\alpha_{\rm th} = 31$; 2) $R_{\rm th} = 4$ bit/s/Hz, $P = -2$ dB and $\alpha_{\rm th} = 23.77$; 3) $R_{\rm th} = 3$ bit/s/Hz,  $P = -3$ dB and $\alpha_{\rm th} = 13.97$. And the advantage becomes larger as $\alpha_{\rm th}$ increases from $\alpha_{\rm th} = 13.97$ to $\alpha_{\rm th} = 31$. 
In addition, with increasing $M$, the average training length of the basic scheme increases and converges, with faster convergence and a smaller value for $\rho = 0.4$ compared to those for $\rho =0.8$. However, the average training length of the modified scheme first increases and then decreases and finally levels off with increasing $M$.
{This is because as $M$ increases, there are more untrained antennas available after each interleaved training step, which increases the diversity of untrained antennas' conditional distributions.}
Furthermore, the performance advantage of the modified scheme over the basic scheme first increases and then levels off as $M$ increases.

\begin{figure}[htp]
\centering
\includegraphics[scale=0.6]{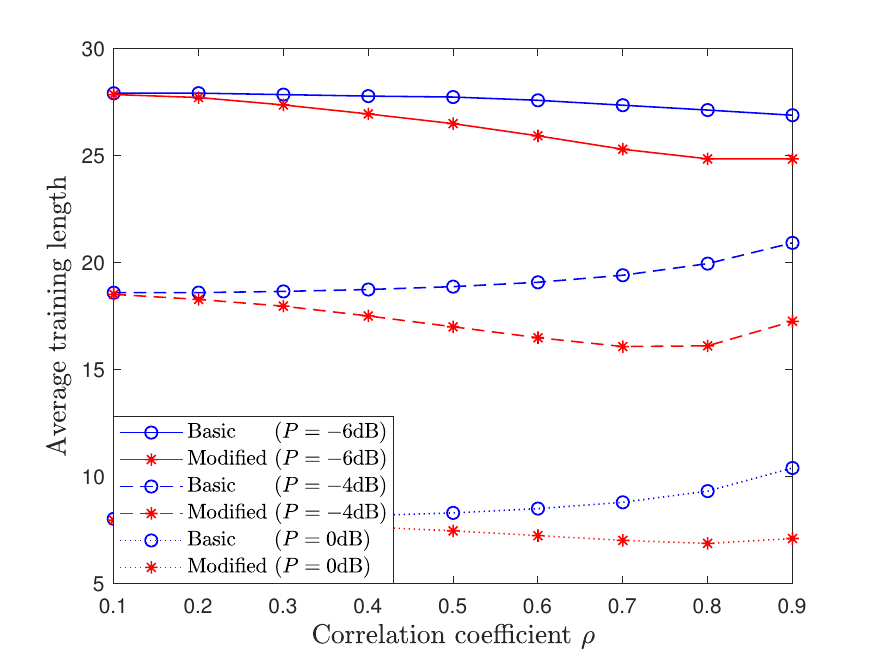}
\caption{Average training length of antenna-domain interleaved training with different channel correlation levels.}
\label{fig:6}
\end{figure}

\begin{figure*}[!b]
	\centering
	\subfloat[Channel AS $\sigma_{A} = 5^{\circ}$]{
		\includegraphics[scale=0.42]{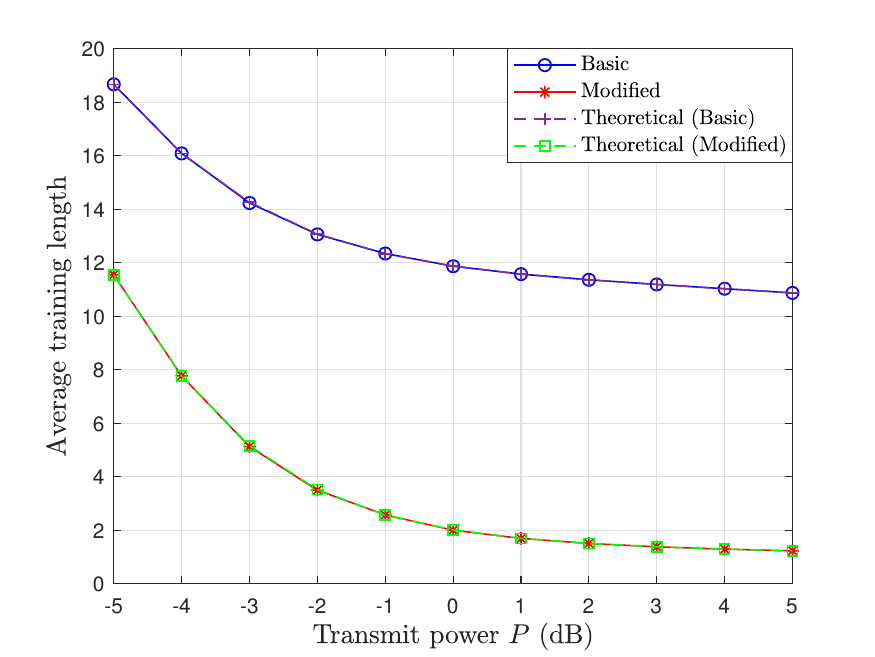}
		\label{fig:7_1}}
\hspace{-2.6em}	
	\subfloat[Channel AS $\sigma_{A} =10^{\circ}$]{
		\includegraphics[scale=0.42]{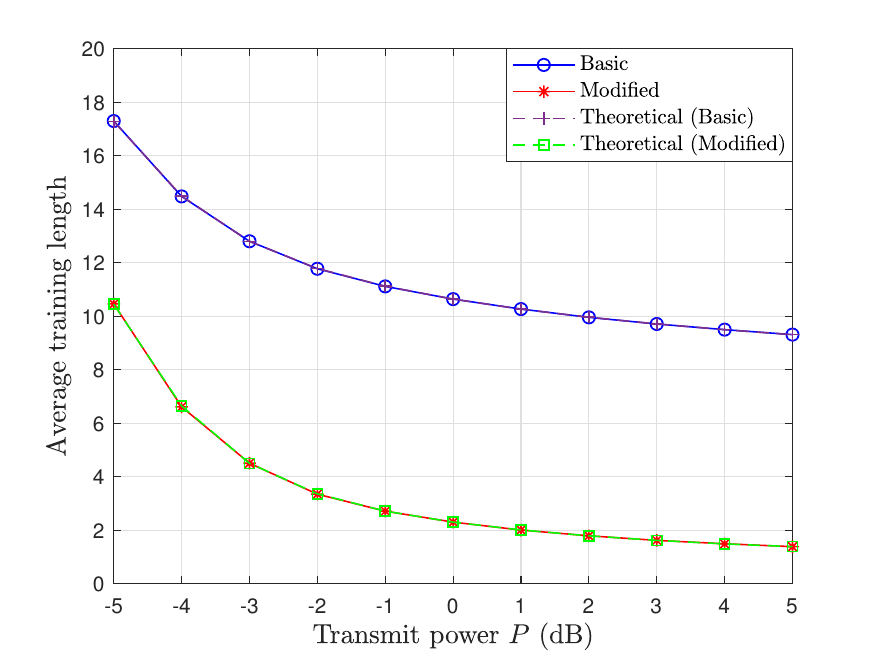}
		\label{fig:7_2}}
\hspace{-2.6em}	
	\subfloat[Channel AS $\sigma_{A} =20^{\circ}$]{
		\includegraphics[scale=0.42]{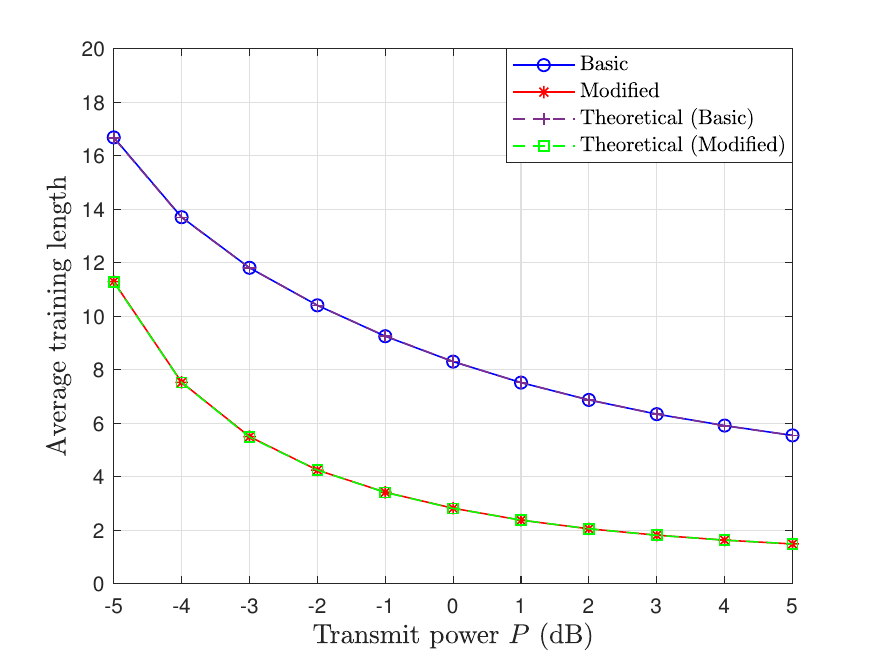}
		\label{fig:7_3}}
	\caption{Average training length of beam-domain interleaved training in one-ring correlated channels.}
	\label{fig:7}
\end{figure*}

Fig. \ref{fig:6} shows the average training lengths of the basic and modified antenna-domain interleaved training schemes with different channel correlation levels for $M=32$,  $R_{\rm th}=3$ bit/s/Hz and $P=-6, -4, 0$ dB.
We can see from the figure that the modified scheme has a shorter average training length than the basic scheme.
As the correlation coefficient $\rho$ increases, the average training length of the basic scheme increases for $P=0$ dB and $-4$ dB, but decreases for $P=-6$ dB. 
 On the contrary, with increasing $\rho$, the average training length of the modified scheme for $P=0, -4, -6$ dB first decreases for $\rho \le  0.8$ and then increases for larger $\rho > 0.8$.

 \begin{figure*}[!t]
	\centering
	\subfloat[Channel AS $\sigma_{A} =5^{\circ}$]{
		\includegraphics[scale=0.42]{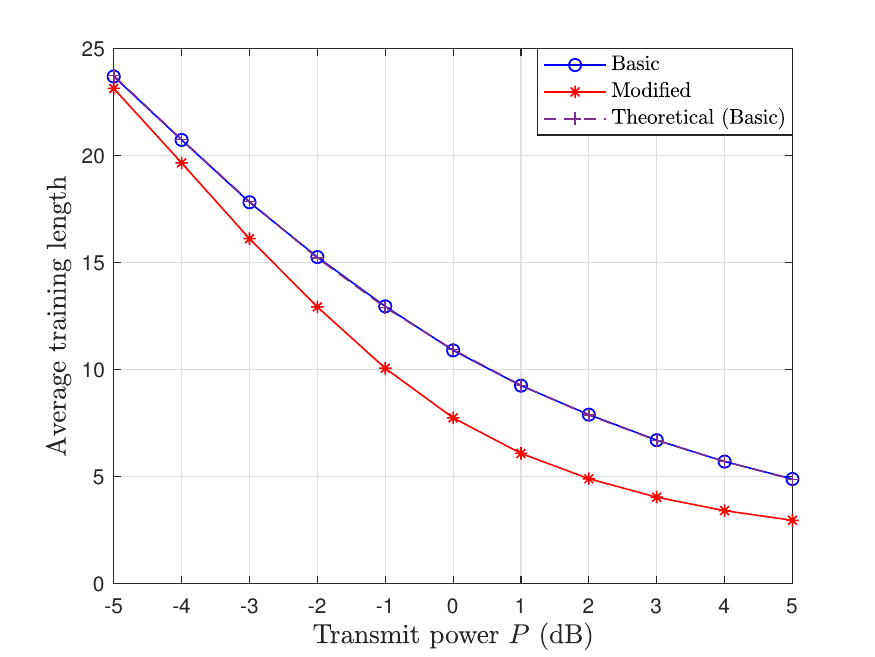}
		\label{fig:8_1}}
\hspace{-2.6em}	
	\subfloat[Channel AS $\sigma_{A} =10^{\circ}$]{
		\includegraphics[scale=0.42]{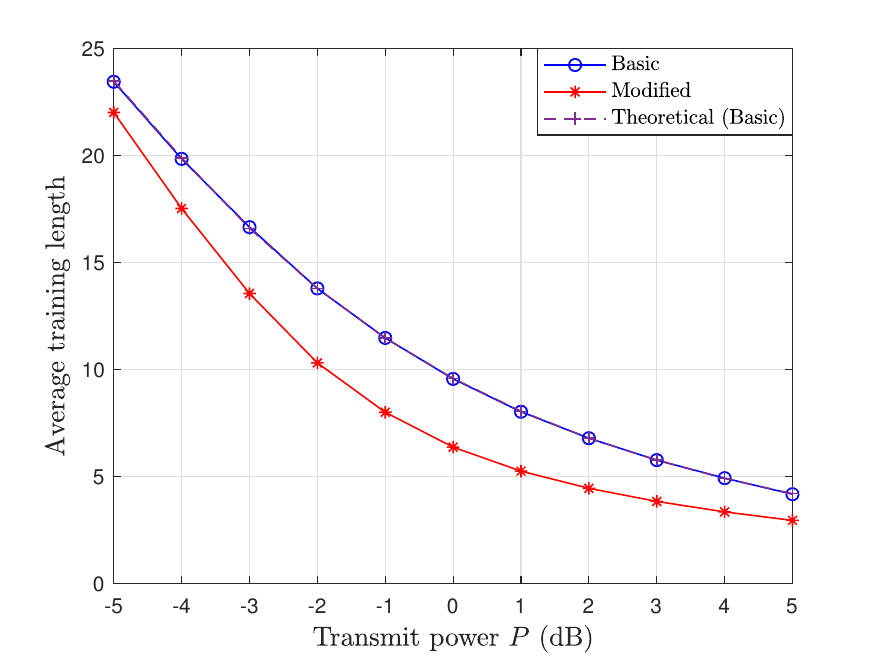}
		\label{fig:8_2}}
\hspace{-2.6em}	
	\subfloat[Channel AS $\sigma_{A} =20^{\circ}$]{
		\includegraphics[scale=0.42]{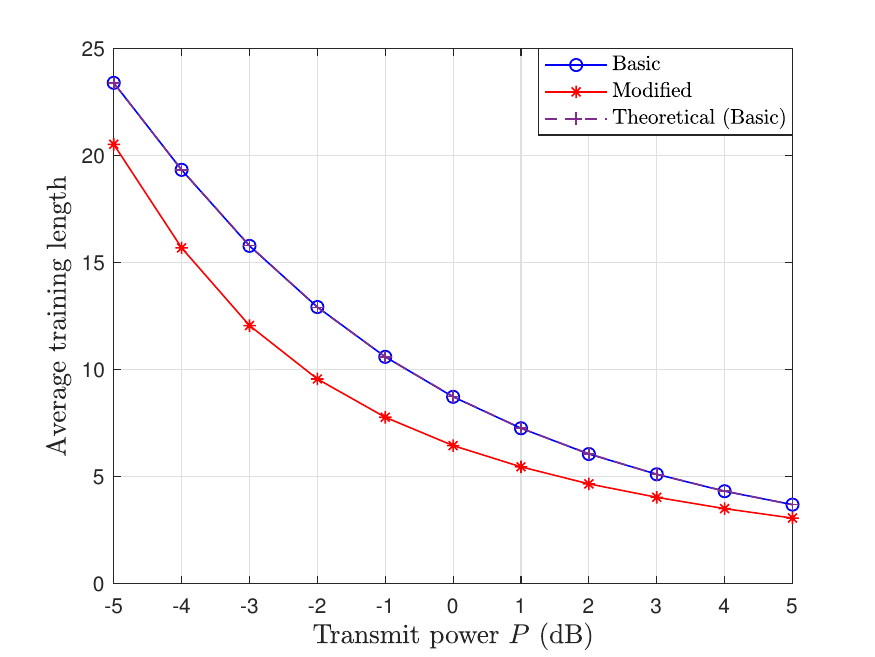}
		\label{fig:8_3}}
	\caption{Average training length of antenna-domain interleaved training scheme in one-ring correlated channels.}
	\label{fig:8}
\end{figure*}

\subsection{Antenna and Beam-Domain Interleaved Training Under the One-Ring Correlated Channels}\label{Basic and Modified Antenna/Beam-Domain Interleaved Training in One-ring Correlated Channels}

In this section, simulations are given to demonstrate the applicability of partial analytical results to the one-ring correlation model in Eq. \eqref{eq_onering}. The uniform PAS model \cite{schumacher2002antenna}, i.e., $f\left(\theta\right)=\frac{1}{2\Delta\theta}, \bar\theta-\Delta\theta\le \theta \le \bar\theta + \Delta\theta$, 
is considered where $\bar\theta$ denotes the mean angular of departure (AoD) and the angular spread (AS) is $\sigma_{A} = \frac{\Delta\theta}{\sqrt{3}}$.

Fig. \ref{fig:7} shows the average training lengths of the basic and modified beam-domain interleaved training schemes with different transmit powers $P\in [-5,5]$ dB under the one-ring correlated channel  model for $M=32$, $R_{\rm th}=3$ bit/s/Hz, $D=0.5$, $\bar\theta=45^{\circ}$ and 
$\sigma_{A}  = 5^{\circ}, 10^{\circ}, 20^{\circ}$. 
The theoretical values of the average training length in Eq. \eqref{eqtl} of Theorem \ref{Theorem1} and Eq. \eqref{eqbeam} of Corollary \ref{Ltmodifed_beamdomain} match the simulation values well. And the modified scheme under three different ASs has an obvious performance advantage over the basic scheme. 
With decreasing AS or increasing channel correlation, this performance advantage enlarges for relatively high transmit power, e.g., $P=5$ dB, while decreases for low transmit power, e.g., $P=-5$ dB.

Fig. \ref{fig:8} shows the average training lengths of the basic and modified antenna-domain interleaved training schemes with different transmit powers $P\in [-5,5]$ dB under the one-ring correlated channel model for $M=32$, $R_{\rm th}=3$ bit/s/Hz, $D=0.5$, $\bar\theta=45^{\circ}$ and 
$\sigma_{A}  = 5^{\circ}, 10^{\circ}, 20^{\circ}$.
The modified scheme in the antenna-domain under three different ASs also outperforms the basic scheme, and the theoretical average training length in Eq. \eqref{eqtl} matches the simulation value well.

\section{Conclusion}
\label{Conclusion}

In this paper, the channel spatial correlation was explored to improve the interleaved training for single-user massive MIMO downlink. 
Via optimizing the beam training codebook and the antenna training sequence based on the channel correlation, we respectively proposed the modified beam-domain and antenna-domain interleaved training schemes. 
For exponentially correlated channels, the conditional channel distribution of an untrained BS antenna given channel values of the already trained BS antennas was demonstrated to be only dependent on the channels of its nearest antennas on both sides in the already-trained antenna set, simplifying the complexity of the modiﬁed antenna-domain scheme significantly. 
Exact and approximate closed-form expressions were derived for the basic and modified beam/antenna-domain schemes in correlated channels. The impact of system parameters, e.g., the channel correlation, the antenna number, and the SNR requirement, on the average training length was explicitly revealed. 
Simulations verified our derivations and demonstrated the performance advantage of our proposed modified schemes.

{In addition to spatial correlation, channel temporal correlation can be exploited to improve the channel acquisition efficiency in massive MIMO systems. Unlike spatial correlation, temporal correlation has causality constraints in the time dimension and we can only use the historical training result to extrapolate. However, the historical result is not complete due to the characteristics of the interleaved scheme. How to explore the temporal correlation in interleaved training is worth to be further studied.
}

\appendix
\section{Appendix}
\subsection{Proof of Corollary \ref{corollary3}}\label{Appendix_c3}
Define
$\mathbf{m} \hspace{-0.05cm}=\hspace{-0.05cm} \left[\rho^{n-a_{1}},...,\rho^{n-a_{x^\star}},\left(\rho^*\right)^{a_{x^{\star}+1}-n},...,\left(\rho^*\right)^{a_{|\mathbb{A}|}-n}\right] $ $\times{{\mathbf{R}}^{-1}_{{{\mathbf{h}}_{\mathbb{A}}}}}=\left[m_1,\dots,m_{\left|\mathbb{A}\right|}\right].$
We denote ${{\mathbf{R}}_{{{\mathbf{h}}_{\mathbb{A}}}}}$ as $\mathbf{R}$ for simplicity of presentation. Then we have
$
{{{m}}_{i}}=\frac{{\rm det}\left( {{\mathbf{R}}_{i}} \right)}{{\rm det}\left( \mathbf{R} \right)}, i\in\mathbb{I} = \{1,...,|\mathbb{A}|\},
$
where $\mathbf{R}_{i}$ is $\mathbf{R}$ by replacing its $i$-th row with $\left[\rho^{n-a_{1}},...,\rho^{n-a_{x^\star}},\left(\rho^*\right)^{a_{x^{\star}+1}-n},...,\left(\rho^*\right)^{a_{|\mathbb{A}|}-n}\right] $. Recall that $\mathbf{R}^{-1}=\frac{{\rm adj}\left( \mathbf{R} \right)}{\det\left(\mathbf{R}\right)}$ where ${\rm adj}\left( \mathbf{R} \right)$ is the adjugate matrix of $\mathbf{R}$, i.e., $\left[{\rm adj}\left( \mathbf{R} \right)\right]_{u,v}=R_{v,u}, \forall u,v=1,\dots,\left|\mathbb{A}\right|$ with $R_{u,v}$ being the algebraic cofactor of $[\mathbf{R}]_{u,v}$.
Therefore, we have ${{{m}}_{i}}=\frac{\sum\limits_{j=1}^{x^\star}{{{\rho }^{{n}-{{a}_{j}}}}}{{{R}}_{i,j}}+\sum\limits_{j=x^{\star}+1}^{|\mathbb{A}|}{{{\left(\rho^*\right)}^{{{a}_{j}}-{n} }}}{{{R}}_{i,j}}}{{\rm det}\left( \mathbf{R} \right)}=\frac{{\rm det}\left( {{\mathbf{R}}_{i}} \right)}{{\rm det}\left( \mathbf{R} \right)}$.

Here we prove that $m_{x^{\star}}$ and $m_{x^{\star}+1}$ are the only two non-zero elements in $\mathbf{m}$, equivalently ${\rm det}\left(\mathbf{R}_{i}\right)=0$ when $i\in\mathbb{I}-\{x^{\star},x^{\star}+1\}$ and ${\rm det}\left(\mathbf{R}_{i}\right)\ne 0$ when $i\in \{x^{\star},x^{\star}+1\}$.
For the first part, it is suffice to show that the row vectors of $\mathbf{R}_{i}$ are linearly dependent when $i\in\mathbb{I}-\{x^{\star},x^{\star}+1\}$ and we show this by construction.
Let $c_{x^{\star}}={{\rho }^{n-{{a}_{x^{\star}}}}}\frac{1-{{\left(\rho^*\rho\right) }^{{{a}_{x^{\star}+1}}-n }}}{1-{{\left(\rho^*\rho\right) }^{{{a}_{x^{\star}+1}}-{{a}_{x^{\star}}}}}},c_{x^{\star}+1}={{\left(\rho^*\right) }^{{{a}_{x^{\star}+1}}-n}}\frac{1-{{\left(\rho^*\rho\right) }^{n-{{a}_{x^{\star}}}}}}{1-{{\left(\rho^*\rho\right) }^{{{a}_{x^{\star}+1}}-{{a}_{x^{\star}}}}}} ,c_i\hspace{-0.05cm}=\hspace{-0.05cm}-1,c_j\hspace{-0.05cm}=\hspace{-0.05cm}0 \text{ for } j \notin \{x^\star,x^\star+1,i\}$, we have via straightforward calculations that $\sum_{j=1}^{|\mathbb{A}|}{c_j[\mathbf{R}_{i}]_{j,[1:|\mathbb{A}|]}}\hspace{-0.1cm}=\hspace{-0.1cm}c_{x^{\star}}[\mathbf{R}_{i}]_{x^{\star},[1:|\mathbb{A}|]}\hspace{-0.025cm}+\hspace{-0.025cm}c_{x^{\star}+1}[\mathbf{R}_{i}]_{x^{\star}+1,[1:|\mathbb{A}|]}-[\mathbf{R}_{i}]_{i,[1:|\mathbb{A}|]}=\mathbf{0}$.

Next we prove that ${{{m}}_{i}} \ne 0$ when $i\in \left\{ x^{\star},x^{\star}+1 \right\}$. Define $\mathbf{A}_{j}=\left[\mathbf{R}\right]_{[j:\left|\mathbb{A}\right|],\{1\}\cup[(1+j):\left|\mathbb{A}\right|]}$ for $j\in \left\{ 1,\ldots ,|\mathbb{A}|-1 \right\}$. Via splitting the $(1,2)$-th element in $\mathbf{A}_j$, i.e., the $(j,j+1)$-th element in $\mathbf{R}$, ${{\left(\rho^*\right)}^{{{a}_{j+1}}-{{a}_{j}}}}$ into $\left( {{\left(\rho^*\right) }^{{{a}_{j+1}}-{{a}_{j}}}}-{{\rho }^{{{a}_{j}}-{{a}_{j+1}}}} \right)+{{\rho }^{{{a}_{j}}-{{a}_{j+1}}}}$, we can split the ${\rm det}\left( \mathbf{A}_j \right)$ into the sum of two determinants and obtain the recurrence formula via expanding the first determinant in Eq. \eqref{EqdetAj} by the second column, i.e.,
\setlength\abovedisplayskip{1pt}
\setlength\belowdisplayskip{1pt}
\begin{equation}\label{EqdetAj} 
\begin{aligned}
{\rm det}&\left( \mathbf{A}_j \right)  = \\
&\left| \begin{matrix}
		{{\rho }^{{{a}_{j}}-{{a}_{1}}}} & {{\left(\rho^*\right) }^{{{a}_{j+1}}-{{a}_{j}}}}-{{\rho }^{{{a}_{j}}-{{a}_{j+1}}}} & \cdots  & {{\left(\rho^*\right) }^{{{a}_{|\mathbb{A}|}}-{{a}_{j}}}}  \\
		{{\rho }^{{{a}_{j+1}}-{{a}_{1}}}} & 0 & \cdots  & {{\left(\rho^*\right) }^{{{a}_{|\mathbb{A}|}}-{{a}_{j+1}}}}  \\
		\vdots  & \vdots  & \ddots  & \vdots   \\
		{{\rho }^{{{a}_{|\mathbb{A}|}}-{{a}_{1}}}} & 0 & \cdots  & 1  \\
	\end{matrix} \right| \\
  + &\left| \begin{matrix}
		{{\rho }^{{{a}_{j}}-{{a}_{1}}}}  & {{\rho }^{{{a}_{j}}-{{a}_{j+1}}}} & \cdots  & {{\left(\rho^*\right) }^{{{a}_{|\mathbb{A}|}}-{{a}_{j}}}}  \\
		{{\rho }^{{{a}_{j+1}}-{{a}_{1}}}} & 1 & \cdots  & {{\left(\rho^*\right) }^{{{a}_{|\mathbb{A}|}}-{{a}_{j+1}}}}  \\
		\vdots  & \vdots  & \ddots  & \vdots   \\
		{{\rho }^{{{a}_{|\mathbb{A}|}}-{{a}_{1}}}} & {{\rho }^{{{a}_{|\mathbb{A}|}}-{{a}_{j+1}}}} & \cdots  & 1  \\
	\end{matrix} \right| \\
=&-\left( {{\left(\rho^*\right) }^{{{a}_{j+1}}-{{a}_{j}}}}-{{\rho }^{{{a}_{j}}-{{a}_{j+1}}}} \right){\rm det}\left( \mathbf{A}_{j+1} \right).
\end{aligned}
\end{equation}
Note that the second determinant in Eq. \eqref{EqdetAj} is zero since the first column of its matrix is ${{\rho }^{{{a}_{j+1}}-{{a}_{1}}}}$ times the second column.
Then we calculate ${\rm det}\left( \mathbf{R} \right)={\rm det}\left( \mathbf{A}_1 \right)$ as follows:
\setlength\abovedisplayskip{1pt}
\setlength\belowdisplayskip{1pt}
\begin{equation} 
\begin{aligned}
{\rm det}\left( \mathbf{R} \right)&={\left( -1 \right)}^{\left|\mathbb{A}\right|-2}{\rm det}\left( \mathbf{A}_{\left|\mathbb{A}\right|-1} \right)\hspace{-0.1cm}\prod\limits_{j=1}^{\left|\mathbb{A}\right|-2}\hspace{-0.1cm}{{{\left(\rho^*\right) }^{{{a}_{j+1}}-{{a}_{j}}}}-{{\rho }^{{{a}_{j}}-{{a}_{j+1}}}}}\\
&=\prod\limits_{j=1}^{|\mathbb{A}|-1}{\left[ 1-{{\left(\rho^*\rho\right) }^{ {{a}_{j+1}}-{{a}_{j}} }} \right]}.
\end{aligned}
\end{equation}

Similar procedure can be used to calculate ${\rm det}\left( {{\mathbf{R}}_{x^{\star}}} \right)$. The difference is that the $(x^*,x^*+1)$-th element ${{\left(\rho^*\right) }^{{{a}_{{x^{\star}}+1}}-{n}}}$ in $ {{\mathbf{R}}_{x^{\star}}} $ is split into $\left( {{\left(\rho^*\right) }^{{{a}_{{x^{\star}}+1}}-{n}}}-{{\rho }^{{n}-{{a}_{{x^{\star}}+1}}}} \right)+{{\rho }^{{n}-{{a}_{{x^{\star}}+1}}}}$.
Then we can obtain
\setlength\abovedisplayskip{1pt}
\setlength\belowdisplayskip{1pt}
\begin{equation}
\begin{aligned}
{{{m}}_{x^{\star}}}&= \frac{{\rm det}\left( {{\mathbf{R}}_{x^{\star}}} \right)}{{\rm det}\left( \mathbf{R} \right)}=\frac{{{\left(\rho^*\right) }^{{{a}_{x^{\star}+1}}-n}}-{{\rho }^{n-{{a}_{x^{\star}+1}}}}}{{{\left(\rho^*\right) }^{{{a}_{x^{\star}+1}}-{{a}_{x^{\star}}}}}-{{\rho }^{{{a}_{x^{\star}}}-{{a}_{x^{\star}+1}}}}}\\
&={{\rho }^{n-{{a}_{x^{\star}}}}}\frac{1-{{\left(\rho^*\rho\right) }^{{{a}_{x^{\star}+1}}-n }}}{1-{{\left(\rho^*\rho\right) }^{{{a}_{x^{\star}+1}}-{{a}_{x^{\star}}}}}} \ne 0,
\end{aligned}
\end{equation}
for $\left|\rho\right|<1$ and $\left|\rho\right|\ne 0$.
$\det \left( \mathbf{R} \right)$ can also be calculated by splitting the $\left( j+1,j \right)$-th element ${{\rho }^{{{a}_{j+1}}-{{a}_{j}}}}$ into $\left( {{\rho }^{{{a}_{j+1}}-{{a}_{j}}}}-{{\left(\rho^*\right)}^{{{a}_{j}}-{{a}_{j+1}}}} \right)+{{\left(\rho^*\right) }^{{{a}_{j}}-{{a}_{j+1}}}}$. Then we split the determinant by the $\left( j+1 \right)$-th row into the sum of two determinants and leave the rest of the rows unchanged. Similarly, we calculate ${\rm det}\left( {{\mathbf{R}}_{x^{\star}+1}} \right)$ in this way. The difference is that the $\left( {x^{\star}+1},{x^{\star}} \right)$-th element ${{\rho }^{{{n}-{a}_{{x^{\star}}}}}}$ in $ {{\mathbf{R}}_{x^{\star}+1}}$ is split into $\left( {{\rho }^{{n}-{{a}_{{x^{\star}}}}}}-{{\left(\rho^*\right)}^{{{{a}_{{x^{\star}}}}}-n}} \right)+{{\left(\rho^*\right) }^{{{{a}_{{x^{\star}}}}}-n}}$. Then we obtain
\setlength\abovedisplayskip{1pt}
\setlength\belowdisplayskip{1pt}
\begin{equation} 
\begin{aligned}
{{{m}}_{x^{\star}+1}}& = \frac{{\rm det}\left( {{\mathbf{R}}_{x^{\star}+1}} \right)}{{\rm det}\left( \mathbf{R} \right)}=\frac{{{\rho }^{n-{{a}_{x^{\star}}}}}-{{\left(\rho^*\right) }^{{{a}_{x^{\star}}}-n}}}{{{\rho }^{{{a}_{x^{\star}+1}}-{{a}_{x}}}}-{{\left(\rho^*\right) }^{{{a}_{x^{\star}}}-{{a}_{x^{\star}+1}}}}}\\
&={{\left(\rho^*\right) }^{{{a}_{x^{\star}+1}}-n}}\frac{1-{{\left(\rho^*\rho\right) }^{n-{{a}_{x^{\star}}}}}}{1-{{\left(\rho^*\rho\right) }^{{{a}_{x^{\star}+1}}-{{a}_{x^{\star}}}}}} \ne 0.
\end{aligned}
\end{equation}

\ifCLASSOPTIONcaptionsoff
  \newpage
\fi

\bibliographystyle{IEEEtran}

\begin{IEEEbiography}[{\includegraphics[width=1in,height=1.25in,clip,keepaspectratio]{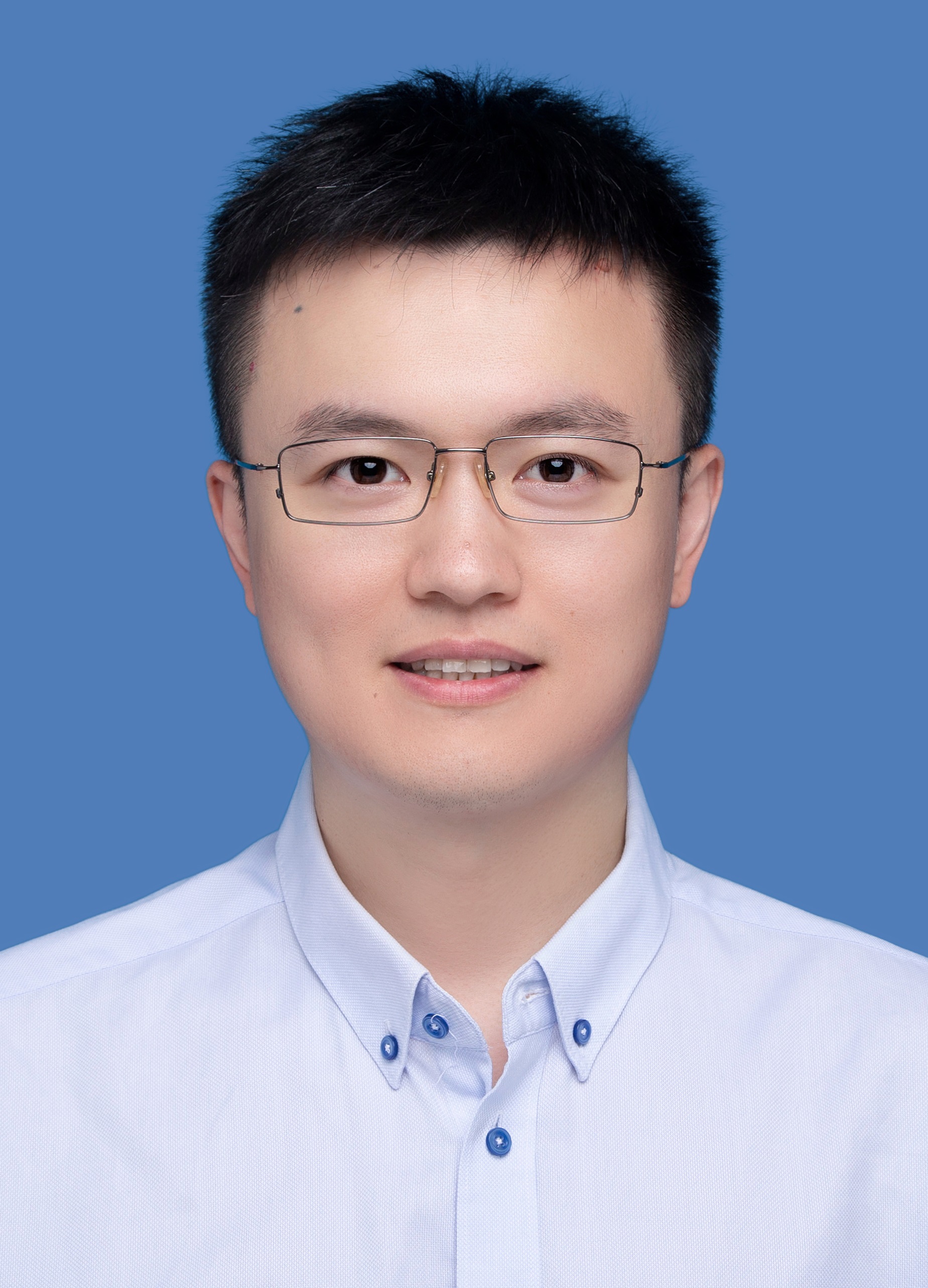}}]{Cheng Zhang}
	(Member, IEEE) received the B.Eng. degree from Sichuan University, Chengdu, China, in June 2009, the M.Sc. degree from the Xi’an Electronic Engineering Research Institute (EERI), Xi’an, China, in May 2012, and the Ph.D. degree from Southeast University (SEU), Nanjing, China, in Dec. 2018. From Nov. 2016 to Nov. 2017, he was a Visiting Student with the University of Alberta, Edmonton, AB, Canada.
	
	From June 2012 to Aug. 2013, he was a Radar Signal Processing Engineer with Xi’an EERI. Since Dec. 2018, he has been with SEU, where he is currently an Associate Professor, and supported by the Zhishan Young Scholar Program of SEU. His current research interests include space-time signal processing and machine learning-aided optimization for B5G/6G wireless communications. He has authored or co-authored more than 50 IEEE journal papers and conference papers. He was the recipient of the excellent Doctoral Dissertation of the China Education Society of Electronics in Dec. 2019, that of Jiangsu Province in Dec. 2020, and the Best Paper Awards of 2023 IEEE WCNC and 2023 IEEE WCSP.
\end{IEEEbiography}

\begin{IEEEbiography}[{\includegraphics[width=1in,height=1.25in,clip,keepaspectratio]{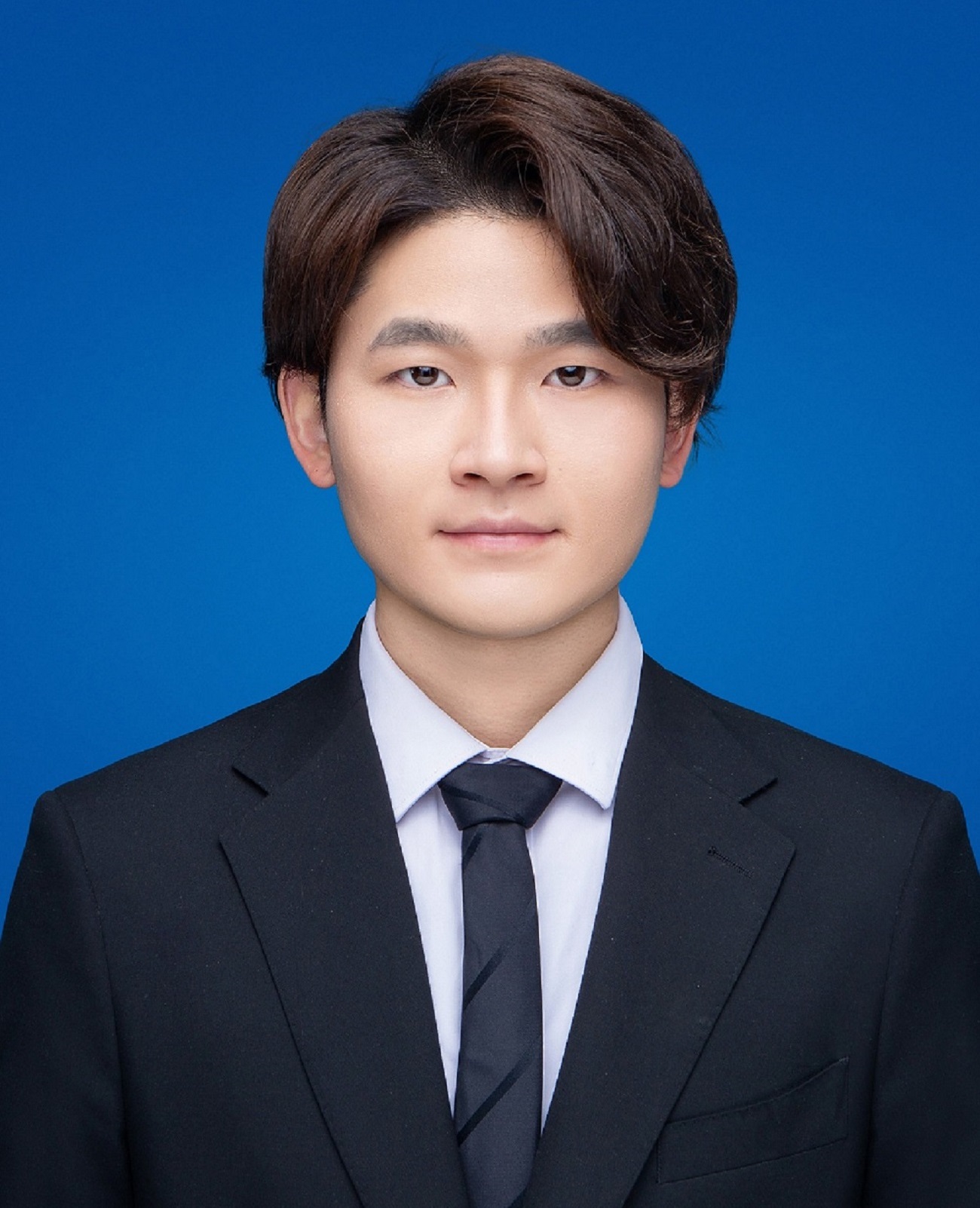}}]{Chang Liu}
	(Student Member, IEEE) received the B.Eng. degree in information engineering with the School of Information Science and Engineering, Southeast University, Nanjing, China, in 2021, where he is currently pursuing the M.Sc. degree as well. His research interests mainly focus on low overhead massive MIMO channel acquisition.
\end{IEEEbiography}

\begin{IEEEbiography}[{\includegraphics[width=1in,height=1.25in,clip,keepaspectratio]{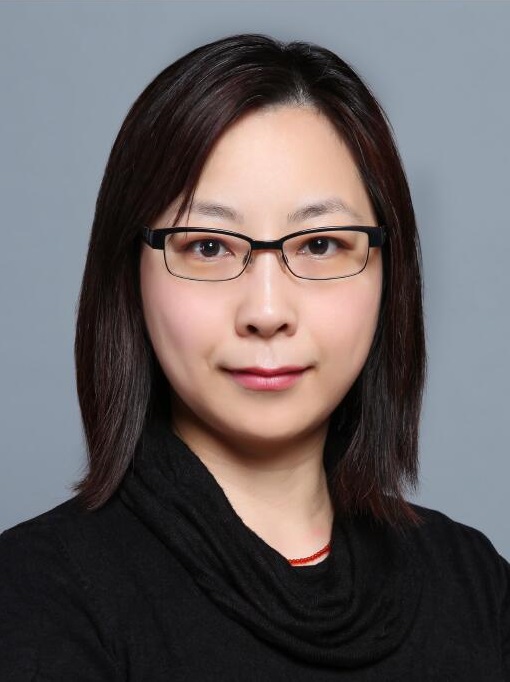}}]{Yindi Jing}
	(Senior Member, IEEE) received the B.Eng. and M.Eng. degrees in automatic control
	from the University of Science and Technology of China, Hefei, China, in 1996 and 1999, respectively. She received the M.Sc. degree and the Ph.D. in electrical engineering from California Institute of Technology, Pasadena, CA, in 2000 and 2004, respectively. From October 2004 to August 2005, she was a postdoctoral scholar at the Department of Electrical Engineering of California Institute of Technology. From February 2006 to June 2008, she was a postdoctoral scholar at the Department of Electrical Engineering and Computer Science of the University of California, Irvine. In 2008, she joined the Electrical and Computer Engineering Department of the University of Alberta, where she is currently a professor. She was an Associate Editor for IEEE TRANSACTIONS ON WIRELESS COMMUNICATIONS and a Senior Area Editor for IEEE SIGNAL PROCESSING LETTERS. She was a member of the IEEE Signal Processing Society Signal Processing for Communications and Networking (SPCOM) Technical Committee and a member of the NSERC Discovery Grant Evaluation Group for Electrical and Computer Engineering. Her research interests are in wireless communications and signal processing.
\end{IEEEbiography}

\begin{IEEEbiography}[{\includegraphics[width=1in,height=1.25in,clip,keepaspectratio]{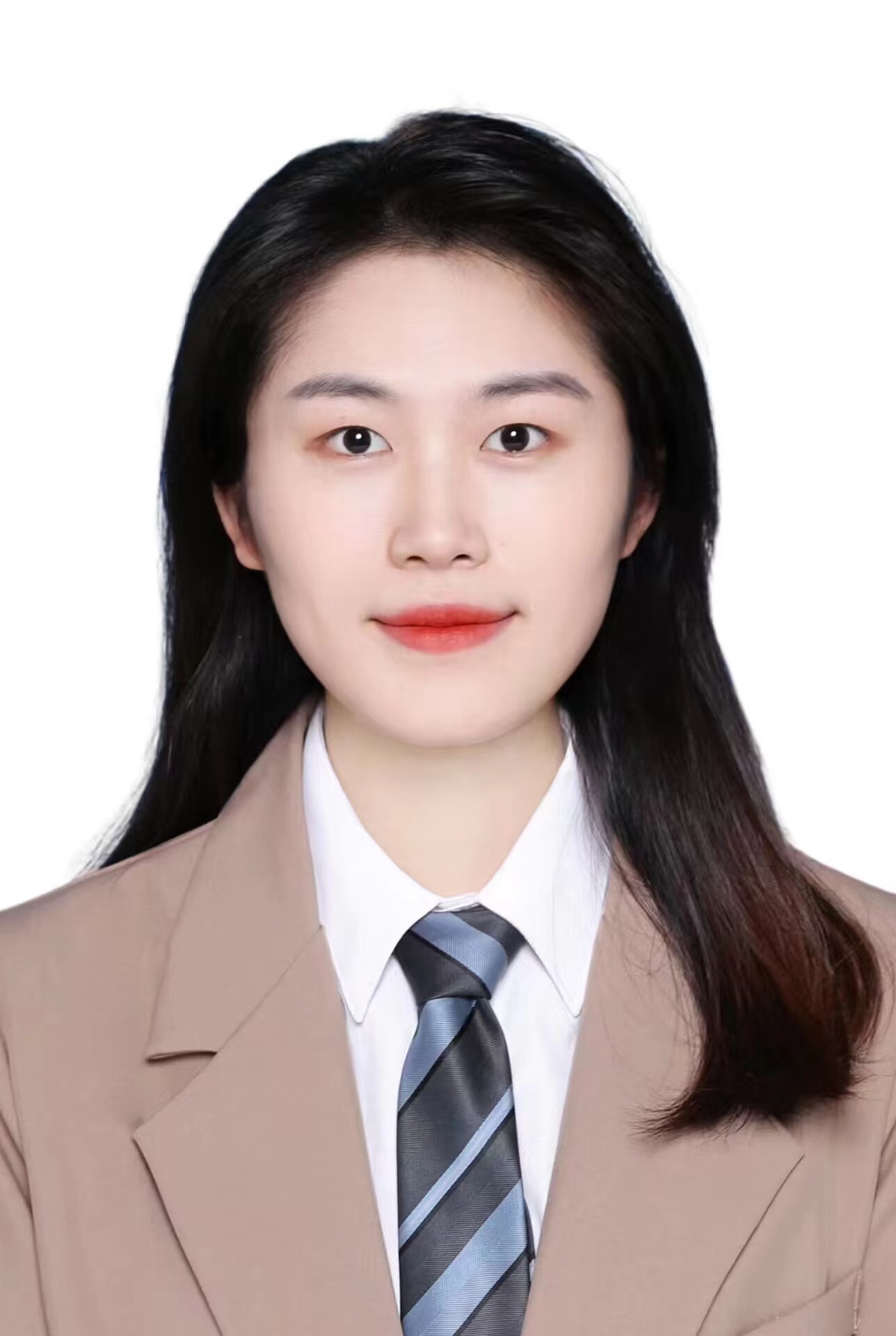}}]{Minjie Ding}
	(Student Member, IEEE) received the B.Eng. degree in information engineering from the School of Electronic Information and Engineering, Nanjing University of Aeronautics and Astronautics, Nanjing, China, in 2020, where she is currently pursuing the M.Sc. degree in information and communication engineering with the School of Information Science and Engineering, Southeast University. Her research interests mainly focus on low overhead massive MIMO channel acquisition.
\end{IEEEbiography}

\begin{IEEEbiography}[{\includegraphics[width=1in,height=1.25in,clip,keepaspectratio]{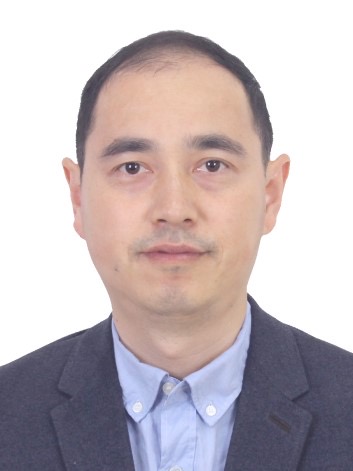}}]{Yongming Huang}
	(M’10-SM’16) received the B.S. and M.S. degrees from Nanjing University, Nanjing, China, in 2000 and 2003, respectively, and the Ph.D. degree in electrical engineering from Southeast University, Nanjing, in 2007.
	
	Since March 2007 he has been a faculty in the School of Information Science and Engineering, Southeast University, China, where he is currently a full professor. He has also been the Director of the Pervasive Communication Research Center, Purple Mountain Laboratories, since 2019. From 2008 to 2009, he was visiting the Signal Processing Lab, Royal Institute of Technology (KTH), Stockholm, Sweden. He has published over 200 peer-reviewed papers, hold over 80 invention patents. His current research interests include intelligent 5G/6G mobile communications and millimeter wave wireless communications. He submitted around 20 technical contributions to IEEE standards, and was awarded a certiﬁcate of appreciation for outstanding contribution to the development of IEEE standard 802.11aj. He served as an Associate Editor for the IEEE Transactions on Signal Processing and a Guest Editor for the IEEE Journal on Selected Areas in Communications. He is currently an Editor-at-Large for the IEEE Open Journal of the Communications Society.
\end{IEEEbiography}


\end{document}